\newcommand{\wild}{\protect\scalebox{.7}{\protect\stackinset{c}{}{c}{}{$\lozenge$}{\textcolor{black!10}{\raisebox{.2em}{$\blacklozenge$}}}}}
\def\dd{\mathinner{.\,.}}
\newcommand{\eps}{\varepsilon}
\newcommand{\cO}{\mathcal{O}}
\newcommand{\sfup}[1]{\textsf{\textup{#1}}}
\newcommand{\LCE}{\sfup{LCE}\xspace}
\newcommand{\LCEW}{\sfup{LCEW}\xspace}
\newcommand{\kpme}{$k$-\sfup{PME}\xspace}
\newcommand{\kpmwe}{$k$-\sfup{PMWE}\xspace}
\newcommand{\cOtilde}{\tilde{\cO}}
\newcommand{\ceil}[1]{\lceil #1 \rceil}
\newcommand{\Ceil}[1]{\left\lceil #1 \right\rceil}
\newcommand{\absolute}[1]{\lvert #1 \rvert}
\newcommand{\sd}{\textsf{Set-Disjointness}\xspace}
\newcommand{\tsum}{\textsf{3SUM}\xspace}
\newcommand{\sdconjstrong}{\textsf{Strong \sd Conjecture}\xspace}
\newcommand{\tsumconj}{\textsf{\tsum Conjecture}\xspace}
\DeclareMathOperator{\polylog}{polylog}
\newcommand{\ed}{\mathrm{ed}}
\newcommand{\jump}{\textsc{Jump}\xspace}
\newcommand{\Ss}{\mathsf{Sel}}
\newcommand{\Tt}{\mathsf{Tr}}
\newcommand{\nextpos}{\textsf{next\_tr}\xspace} 
\newcommand{\nextsel}{\textsf{next\_sel}\xspace} 
\newcommand{\IN}{m_{in}}
\newcommand{\OUT}{m_{out}} 
\newcommand{\matches}{\sim{}}
\newcommand{\mismatches}{\nsim{}}
\newlength{\problemparwidth}
\newcommand{\problemdef}[5]{%
\begin{center}
\noindent\fbox{%
  \begin{minipage}{0.982\linewidth}
    \textbf{\textsf{#1}}\\
    {\bf{#2:}} #3\\
    {\bf{#4:}} #5
  \end{minipage}%
  }
  \end{center}
}
\newcommand{\problemoutput}[3]{\problemdef{#1}{Input}{#2}{Output}{#3}}
\newcommand{\problemtask}[3]{\problemdef{#1}{Input}{#2}{Task}{#3}}
\newcommand{\problemquery}[3]{\problemdef{#1}{Input}{#2}{Query}{#3}}
\newtheorem{theorem}{Theorem}
\newtheorem{lemma}[theorem]{Lemma}
\newtheorem{corollary}[theorem]{Corollary}
\newtheorem{conjecture}[theorem]{Conjecture}
\newtheorem{fact}[theorem]{Fact}
\newenvironment{claimproof}{\begin{proof}}{\end{proof}}
\newtheorem{claim}[theorem]{Claim}%
\newtheorem{example}[theorem]{Example}%
\newtheorem{observation}[theorem]{Observation}%
\title{\vspace{-2ex}Longest Common Extensions with Wildcards: Trade-off and Applications}
\author[1,2]{Gabriel Bathie}
\author[3,4]{Itai Boneh}
\author[5]{Panagiotis Charalampopoulos}
\author[1]{Jonas Ellert}
\author[1]{Tatiana Starikovskaya}
\affil[1]{DIENS, \'{E}cole Normale Sup\'{e}rieure, Paris, France\\
\texttt{$\{$gabriel.bathie,ellert.jonas,tat.starikovskaya$\}$@gmail.com}}
\affil[2]{LaBRI, Universit\'{e} de Bordeaux, France}
\affil[3]{Reichman University, Herzliya, Israel\\
    \texttt{itai.bone@biu.ac.il}}
\affil[4]{University of Haifa, Israel}
\affil[5]{Birkbeck, University of London, United Kingdom\\
    \texttt{pcharalampo@gmail.com}}
\date{\vspace{-5ex}}
\begin{document}

\maketitle

\thispagestyle{empty}

\abstract{%
  We study the Longest Common Extension (LCE)
  problem in a string containing wildcards.
  Wildcards (also called ``don't cares'' or ``holes'')
  are special characters that match any other character in the alphabet,
  similar to the character ``\texttt{?}'' in Unix commands or ``\texttt{.}'' in regular expression engines.

  We consider the problem parametrized by $G$, the number of maximal
  contiguous groups of wildcards in the input string.
  Our main contribution is a simple data structure for this problem that can be built in
  $\mathcal O(n (G/t) \log n)$ time, occupies $\mathcal O(nG/t)$ space,
  and answers queries in $\mathcal O(t)$ time, for any $t \in [1 \dd G]$.
  Up to the $\mathcal O(\log n)$ factor, this interpolates smoothly between the 
  data structure of Crochemore et al.~[JDA 2015], which has $\mathcal O(nG)$ preprocessing time and space, and $\mathcal O(1)$ query time,
  and a simple solution based on the ``kangaroo jumping'' technique [Landau and Vishkin, STOC 1986],
  which has $\mathcal O(n)$ preprocessing time and space, and $\mathcal O(G)$ query time. 
  
  By establishing a connection between this problem and Boolean matrix multiplication, we show that our solution is optimal, up to subpolynomial factors, among combinatorial data structures when $G = \Omega(n^\eps)$ under a widely believed hypothesis.
  In addition, we develop a simple deterministic combinatorial algorithm for sparse Boolean matrix multiplication. We further establish a conditional lower bound for non-combinatorial data structures, stating that $\mathcal O(nG/t^4)$ preprocessing time (resp.\ space) is optimal, up to subpolynomial factors, for any data structure with query time $t$ for a wide range of $t$ and $G$, assuming the well-established \tsum (resp.\ \sd) conjecture.
  
  Finally, we show that our data structure can be used to obtain
  efficient algorithms for approximate pattern matching
  and structural analysis of strings with wildcards.
  First, we consider the problem of pattern matching with $k$ errors (i.e., edit operations) in the setting where both the pattern and the text may contain wildcards.
  The ``kangaroo jumping'' technique can be adapted to yield an algorithm for this problem with time complexity $\mathcal O(n(k+G))$, 
  where $G$ is the total number of maximal contiguous groups of wildcards in the text and the pattern and $n$ is the length of the text.
  By combining ``kangaroo jumping'' with a tailor-made data structure for LCE queries, Akutsu [IPL 1995] devised an $\mathcal O(n\sqrt{km} \polylog m)$-time algorithm.
  We improve on both algorithms when $k \ll G \ll m$ by giving an algorithm running in time $\mathcal O(n(k + \sqrt{Gk \log n}))$. 
Secondly, we give $\mathcal O(n\sqrt{G} \log n)$-time and $\mathcal O(n)$-space algorithms for computing the prefix array, as well as the quantum/deterministic border and period arrays of a string with wildcards.
This is an improvement over the $\mathcal O(n\sqrt{n\log n})$-time algorithms of Iliopoulos and Radoszewski [CPM 2016] when $G =  o(n / \log n)$.  
}
  
\clearpage
\setcounter{page}{1}

\section{Introduction}
Given a string $T$, the \emph{longest common extension} (\LCE) at indices $i$ and $j$
is the length of the longest common prefix of the suffixes of $T$ starting at indices $i$ and $j$.
In the \LCE problem, given a string $T$, the goal is to build a data structure that can efficiently answer \LCE queries.

Longest common extension queries are a powerful string operation
that underlies a myriad of string algorithms, for problems such as
approximate pattern matching~\cite{akutsu, amir2004faster, unified, galil1986improved, landau1986efficient, kangaroo},
finding maximal or gapped palindromes~\cite{LCE-tradeoffs, DBLP:conf/cpm/Charalampopoulos22, DBLP:books/cu/Gusfield1997, kolpakov2009searching}, and computing the repetitive structure (e.g., runs) in strings~\cite{bannai2014new, kolpakov1999finding}, to name just a few.

Due to its importance, the \LCE problem and its variants have received a lot of attention~\cite{
bille2018finger,
bille2017fingerprints,
bille2015longest,
LCE-tradeoffs,
10.5555/3381089.3381126,
10.1007/11780441_5,
DBLP:conf/soda/GawrychowskiK17,
doi:10.1137/1.9781611975031.99,
doi:10.1137/0213024,
DBLP:conf/stoc/KempaK19,
KociumakaPhD,
10353220,
DBLP:journals/corr/abs-2105-03782,
nishimoto2016fully,
10.1145/3426870,
tanimura2016deterministic,
tanimura2017small,
tomohiro2017longest,doi:10.1137/1.9781611977073.111}.
The suffix tree of a string of length~$n$ occupies $\Theta(n)$ space and can be preprocessed in $\cO(n)$ time to answer \LCE queries in constant time~\cite{10.1007/11780441_5,doi:10.1137/0213024}. 
However, the $\Theta(n)$ space requirement can be prohibitive for applications such as computational biology that deal with extremely large strings. 
Consequently, much of the recent research has focused on designing data structures that use less space without being (much) slower in answering queries.
Consider the setting when we are given a read-only length-$n$ string $T$ over an alphabet of size polynomial in $n$.
Bille et al.~\cite{bille2015longest} gave a data structure for the \LCE problem that, for any given user-defined parameter $\tau \le n$, occupies $\cO(\tau)$ space on top of the input string and answers queries in $\cO(n/\tau)$ time.
Kosolobov~\cite{kosolobov2017tight} showed that this data structure is optimal when $\tau = \Omega(n/ \log n)$.
A drawback of the data structure of Bille et al.~\cite{bille2015longest} is its rather slow $\cO(n^{2+\varepsilon})$ construction time.
This motivated studies towards an \LCE data structure with optimal space and query time and a fast construction algorithm.
Gawrychowski and Kociumaka~\cite{DBLP:conf/soda/GawrychowskiK17} gave an optimal $\cO(n)$-time and $\cO(\tau)$-space Monte Carlo construction algorithm and Birenzwige et al.~\cite{10.5555/3381089.3381126} gave a Las Vegas construction algorithm with the same complexity provided $\tau = \Omega(\log n)$.
Finally, Kosolobov and Sivukhin~\cite{DBLP:journals/corr/abs-2105-03782} gave a deterministic construction algorithm that works in optimal $\cO(n)$ time and $\cO(\tau)$ space for $\tau = \Omega(n^\varepsilon)$, where $\varepsilon > 0$ is an arbitrary constant.
Another line of work~\cite{bille2018finger,bille2017fingerprints,doi:10.1137/1.9781611975031.99,nishimoto2016fully, 
tanimura2016deterministic, tanimura2017small,tomohiro2017longest,10353220,doi:10.1137/1.9781611977073.111} considers \LCE data structures over compressed strings.

One important variant of the \LCE problem is that of \LCE with $k$-mismatches ($k$-\LCE),
where one wants to find the longest prefixes that differ in at most $k$ positions, for a given integer parameter $k$.
Landau and Vishkin~\cite{kangaroo} proposed a technique, dubbed ``kangaroo jumping'', that reduces $k$-\LCE to $k+1$ standard \LCE queries. This technique is a central component of many approximate pattern matching algorithms, under the Hamming~\cite{amir2004faster, unified} and the edit~\cite{akutsu, unified, kangaroo} distances.

In this work, we focus on the variant of \LCE in strings with \emph{wildcards}, denoted \LCEW. Wildcards (also known as \emph{holes} or \emph{don't cares}), denoted $\wild$, are special characters that match every character of the alphabet. Wildcards are a versatile tool for modeling uncertain data, and algorithms on strings with wildcards have garnered considerable attention in the literature~\cite{clifford2007simple,
DBLP:conf/stoc/ColeH02,
FP74,DBLP:conf/focs/Indyk98a,
DBLP:conf/soda/Kalai02a,
DBLP:conf/stacs/CliffordGLS18,
DBLP:journals/algorithmica/GolanKP19,
DBLP:conf/soda/Fischer24,
akutsu,
DBLP:conf/soda/CliffordEPR09,DBLP:journals/jcss/CliffordEPR10,DBLP:journals/algorithms/NicolaeR15,amir2004faster,DBLP:journals/ipl/NicolaeR17,DBLP:journals/corr/abs-2402-07732,
DBLP:conf/stoc/ColeGL04,DBLP:journals/siamcomp/Patrascu11,DBLP:journals/mst/BilleGVV14,DBLP:conf/icalp/AfshaniN16,
DBLP:conf/icalp/AbboudWW14,
iliopoulos2016truly,
DBLP:journals/dam/ManeaMT14,
DBLP:conf/iwoca/Blanchet-SadriH15,
DBLP:journals/dam/Blanchet-SadriO18,
DBLP:journals/ejc/BollobasL18}. 

Given a string $T$, and indices $i,j$, $\LCEW(i,j)$ is the length of the longest matching prefixes of the suffixes of $T$ starting at indices $i$ and $j$. For all $\tau \in [1 \dd n]$, Iliopoulos and Radoszewski~\cite{iliopoulos2016truly} showed an \LCEW data structure with $\cO(n^2 \log n / \tau)$ preprocessing time, $\cO(n^2/\tau)$ space, and $\cO(\tau)$ query time.
In the case where the number of wildcards in~$T$ is bounded, more efficient data structures exist. 
The \LCEW problem is closely related to $k$-\LCE:
if we let $\widehat{T}$ be the string obtained by replacing each wildcard in~$T$
with a new character, the $i$-th wildcard replaced with a fresh letter $\#_i$, then 
an \LCEW query in $T$ can be reduced to a $D$-\LCE query in $\widehat{T}$,
where $D$ is the number of wildcards in $T$. Consequently, an \LCEW query can be answered using $\cO(D)$ \LCE queries.
In particular, if we use the suffix tree to answer \LCE queries, we obtain a data structure with $\cO(n)$ space and construction time and $\cO(D)$ query time. At the other end of the spectrum, Blanchet-Sadri and Lazarow~\cite{10.1007/978-3-642-37064-9_16} showed that one can achieve $\cO(1)$ query time using $\cO(nD)$ space after an $\cO(nD)$-time preprocessing.
\begin{example}\label{example:param}
For string $T = \texttt{abab} \wild \wild \wild \texttt{aaaa} \wild \wild \wild \wild \texttt{ba} \wild \wild \wild \texttt{bb}$, we have $D = 10$ and $G = 3$.
\end{example}
By using the structure of the wildcards inside the string, one can improve the aforementioned bounds even further.
Namely, it is not hard to see that if the wildcards in $T$ are arranged in~$G$ maximal contiguous groups (see Example~\ref{example:param}), then we can reduce the number of \LCE queries needed to answer an \LCEW query to $G$ by jumping over such groups, thus obtaining a data structure with $\cO(n)$-time preprocessing, $\cO(n)$ space, and $\cO(G)$ query time.
On the other hand, Crochemore et al.~\cite{crochemore2015note} devised an $\cO(nG)$-space data structure that can be built in $\cO(nG)$ time and can answer \LCEW queries in constant time.

\subsection{Our Results}
In this work, we present an \LCEW data structure
that achieves a smooth space-time trade-off between the data structure based on ``kangaroo jumps'' and that of Crochemore et al.~\cite{crochemore2015note}.
As our main contribution, we show that for any $t\le G$, there exists a set of $\cO(G/t)$
positions, called \textit{selected} positions, 
that intersects any chain of $t$ kangaroo jumps from a fixed pair of positions.
Given the \LCEW information on selected positions,
we can speed up \LCEW queries on arbitrary positions
by jumping from the first selected position in the longest common extension 
to the last selected position in the longest common extension.
This gives us an $\cO(t)$ bound on the number of kangaroo jumps we need to perform to answer a query.
We leverage the fast FFT-based algorithm of Clifford and Clifford~\cite{clifford2007simple}
for pattern matching with wildcards to efficiently
build a dynamic programming table containing the result
of \LCEW queries on pairs of indices containing a selected position;
this table allows us to jump from the first to the last selected position in the longest common extension in constant time.
The size of the table is $\cO(nG/t)$, while the query time is $\cO(t)$.
For comparison, the data structures of Crochemore et al.~\cite{crochemore2015note}
and of Iliopoulos and Radoszewski~\cite{iliopoulos2016truly} use a similar dynamic programming scheme
that precomputes the result of \LCEW queries for a subset of positions:
Crochemore et al.~use all transition positions (see \cref{sec:data-structure} for a definition),
while Iliopoulos and Radoszewski use one in every $\sqrt{n}$ positions.
We use a more refined approach, that allows us to obtain both a dependency on $G$ instead of $n$ and a space-query-time trade-off.
Our result can be stated formally as follows.

\begin{restatable}{theorem}{maintheorem}\label{thm:data-structure}
  Suppose that we are given a string $T$ of length $n$ that contains wildcards arranged into $G$ maximal contiguous groups.
  For every $t \in [1 \dd G]$, there exists a deterministic data structure that:
  \begin{itemize}
    \item uses space $\cO(nG/t)$, 
    \item can be built in time $\cO(n(G/t)\log n)$ using $\cO(nG/t)$ space,
    \item given two indices $i,j \in [1\dd n]$, returns $\LCEW(i,j)$ in time $\cO(t)$.
  \end{itemize}

\end{restatable}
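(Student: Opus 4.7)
The plan is to combine classical kangaroo jumping with a precomputed table indexed by a sparse set of \emph{selected} transition positions. First, I would preprocess $T$ so that a single kangaroo jump takes $\cO(1)$ time. Replacing the $i$th wildcard of $T$ by a fresh sentinel symbol yields a string $T^{\#}$ on which the suffix tree (built in $\cO(n)$ time and space) answers classical LCE queries in $\cO(1)$ time, returning the longest wildcard-free matching prefix of two suffixes of~$T$. Together with an auxiliary array that, for each position, stores the next wildcard group after it, this lets me perform a kangaroo jump from any pair $(i,j)$ in constant time.

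Next, I would define the selected positions. Let $W_1, \ldots, W_G$ be the maximal wildcard groups of $T$ in order of appearance, and let $\tau_a$ denote the position immediately following $W_a$; the $\tau_a$'s are the \emph{transition positions}. Declare every $\lceil t/2 \rceil$-th transition \emph{selected}, yielding $\cO(G/t)$ selected positions. The key combinatorial observation is twofold: (i) after each kangaroo jump from $(i,j)$ to $(i',j')$, at least one of $i', j'$ is a transition position, namely the endpoint of the wildcard group just skipped; and (ii) restricted to either coordinate, the transitions visited along any chain of jumps are endpoints of \emph{consecutive} wildcard groups of $T$ in order (a short case analysis handles the scenario in which a jump on one coordinate drags the other partially through a new wildcard group). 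In a chain of $t$ jumps the visited transitions therefore split as $t = t_i + t_j$ across the two coordinates; since $\max(t_i, t_j) \ge \lceil t/2 \rceil$, by the spacing of the selection the coordinate carrying at least $\lceil t/2 \rceil$ consecutive visited transitions must contain a selected one.

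The central object is a table $A$ of size $\cO(nG/t)$ with $A[s][q] = \LCEW(s, q)$ for every selected~$s$ and every $q \in [1 \dd n]$. The main technical obstacle is to fill $A$ in time $\cO(n(G/t)\log n)$, which amounts to $\cO(n \log n)$ per selected position. For each selected $s$, I would invoke the FFT-based pattern matching with wildcards algorithm of Clifford and Clifford~\cite{clifford2007simple} on the pattern $T[s \dd n]$ and the text $T$ itself, obtaining per-alignment mismatch information in $\cO(n \log n)$ time. Combined with the $\cO(1)$-time kangaroo-jump primitive, this information lets me recover the first mismatch of $T[s \dd n]$ against every suffix of $T$, i.e.\ the entire row $A[s][\,\cdot\,]$. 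I expect this extraction step---turning Clifford--Clifford's aggregate per-alignment output into per-alignment first-mismatch positions without paying an extra $\log n$ factor---to be the most delicate part of the proof.

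Finally, to answer a query $\LCEW(i,j)$, I would iteratively kangaroo-jump from $(i_0, j_0) = (i, j)$ at $\cO(1)$ per jump, halting early if a genuine mismatch is uncovered by the LCE primitive on $T^{\#}$. After each jump, one coordinate sits at a transition position, which I test for membership in the selected set in $\cO(1)$ time. By the combinatorial observation, within at most $t$ jumps either we halt early or the test succeeds at some state $(s, q)$ (after swapping coordinates if necessary so that $s$ is the selected one). Letting $d$ be the cumulative distance advanced so far, the answer is $d + A[s][q]$, returned in $\cO(t)$ total time. Together with the preprocessing above, this yields the claimed space, build-time, and query-time bounds.
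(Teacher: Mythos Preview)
Your overall shape is right, but there is a real gap in the preprocessing and a flaw in the combinatorial claim.

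\textbf{The table.} You store $A[s][q]=\LCEW(s,q)$ and plan to fill each row by running Clifford--Clifford with pattern $T[s\dd n]$; but that algorithm returns only a Boolean match/no-match per alignment, not the position of the first mismatch. ``Combining with the kangaroo-jump primitive'' does not salvage this within the budget: for each alignment~$q$ with a mismatch you still have to \emph{locate} it, and doing so by jumping costs up to $\Theta(G)$ per~$q$, or $\Theta(t)$ per~$q$ even after first reducing to the block between consecutive selected positions---hence $\Theta(nt)$ per row and $\Theta(nG)$ overall, which is no better than Crochemore et al. The paper sidesteps the issue by storing a \emph{weaker} quantity: $\jump[s,q]$ is the offset from $s$ to the \emph{last selected position still inside the match}, not the full $\LCEW$. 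This admits a simple right-to-left dynamic program: writing the selected positions as $s_1<\dots<s_\lambda$, $P_r=T[s_r\dd s_{r+1})$, and $\ell_r=|P_r|$, a single Clifford--Clifford call on the short pattern $P_r$ against $T$ yields a bit-array $A_r$, and then $\jump[s_r,q]$ is either $-\infty$, $0$, or $\ell_r+\jump[s_{r+1},q+\ell_r]$, depending only on $A_r[q]$ and whether $T[s_r]\matches T[q]$. The price for storing less is paid at query time: after the table lookup one must still perform a further round of $\cO(t)$ kangaroo jumps to locate the actual mismatch beyond the last selected position. The paper's query therefore has three phases (jump to a selected position, table lookup, jump from the last selected position to the mismatch), not two.

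\textbf{Claim (ii).} The assertion that the transitions visited on a single coordinate are endpoints of \emph{consecutive} wildcard groups is false: a long wildcard group on the $j$-side can drag the $i$-side completely past several $i$-side wildcard groups (and their transitions) in a single jump, so the next transition the $i$-side actually lands on may skip indices. Hence ``one side visits at least $\lceil t/2\rceil$ transitions'' does not imply that any of them is selected. The paper uses a different potential: the total number of transitions strictly between the current position and the next selected position, summed over both coordinates. Each iteration lands at least one side on a transition, so this sum---initially at most $2t$---drops by at least one per iteration. Crucially, each jump is capped at $m=\min(\nextsel[i],\nextsel[j])$, so that one can never overshoot the first selected position on either side; your description omits this, and without it the query can skip past all selected positions indefinitely.
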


We further show that this trade-off can be extended to $t \ge G$ by implementing the kangaroo jumping method of Landau and Vishkin~\cite{kangaroo} with a data structure that provides a time-space trade-off for (classical) \LCE queries. Using the main result of Kosolobov and Sivukhin~\cite{DBLP:journals/corr/abs-2105-03782}, we obtain the following:

\begin{corollary}\label{cor:extended-tradeoff}
Suppose that we are given a read-only string $T$ of length $n$ that contains wildcards arranged into $G$ maximal contiguous groups.
For every constant $\varepsilon > 0$ and $t \in [G \dd G \cdot n^{1-\varepsilon}]$, there exists a data structure that:
\begin{itemize}
\item uses space $\cO(nG/t)$,
\item can be built in time $\cO(n)$ using $\cO(nG/t)$ space, 
\item given two indices $i,j \in [1\dd n]$, returns $\LCEW(i,j)$ in time $\cO(t)$.
\end{itemize}
\end{corollary}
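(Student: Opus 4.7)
The plan is to combine the kangaroo jumping technique of Landau and Vishkin with the deterministic space-efficient LCE data structure of Kosolobov and Sivukhin~\cite{DBLP:journals/corr/abs-2105-03782}, applied to an auxiliary string. Let $\widehat{T}$ denote the length-$n$ string obtained from $T$ by replacing the $i$-th wildcard occurrence with a fresh character $\#_i$. Since each symbol of $\widehat{T}$ can be computed in $\cO(1)$ time from (read-only) $T$ and the resulting alphabet is polynomial in $n$, we may treat $\widehat{T}$ as a read-only input for their algorithm.

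First, I would invoke the Kosolobov-Sivukhin construction on $\widehat{T}$ with parameter $\tau := \Theta(nG/t)$. The assumption $t \leq G \cdot n^{1-\varepsilon}$ gives $\tau = \Omega(n^\varepsilon)$, so their algorithm applies and yields, in deterministic time $\cO(n)$ and space $\cO(nG/t)$, a data structure that answers standard LCE queries on $\widehat{T}$ in time $\cO(n/\tau) = \cO(t/G)$. In addition, I would precompute the sorted list of the $G$ maximal wildcard intervals of $T$, using $\cO(G) \subseteq \cO(nG/t)$ extra space and $\cO(n)$ preprocessing.

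For a query $\LCEW(i,j)$, I would execute the standard kangaroo-jumping loop on top of $\widehat{T}$: maintain an extension length $\ell$, initially $0$, and repeatedly issue LCE queries on $\widehat{T}$ starting at positions $i+\ell$ and $j+\ell$. If the returned mismatch lies at two non-wildcard characters of $T$, the correct answer is $\ell$ plus the LCE value; otherwise the mismatch is caused by a wildcard on at least one side, and we advance $\ell$ past the enclosing wildcard group(s) before iterating. Each iteration skips an entire wildcard group of $T$, so at most $\cO(G)$ iterations occur, each costing $\cO(t/G)$ for its LCE query. To amortize the wildcard-group lookups into $\cO(1)$ per iteration, I would maintain two pointers into the sorted interval list that track the current positions in the two suffixes; since $\ell$ only increases, the total pointer advancement across a single query is $\cO(G)$. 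Summing the contributions yields the claimed $\cO(t)$ query time.

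The only real point to check is the interplay between the Kosolobov-Sivukhin machinery and the virtual string $\widehat{T}$; this is mild, as their algorithm only relies on $\cO(1)$-time random read access to a polynomial-alphabet string, which $\widehat{T}$ provides on top of read-only $T$. The remaining steps are mechanical combinations of well-understood techniques, so I do not anticipate any substantial obstacle.
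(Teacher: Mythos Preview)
Your proposal is correct and takes essentially the same approach as the paper's two-sentence proof: build the Kosolobov--Sivukhin \LCE data structure with parameter $\tau = \Theta(nG/t) = \Omega(n^\varepsilon)$, then answer each \LCEW query by $\cO(G)$ kangaroo jumps over wildcard groups at cost $\cO(n/\tau) = \cO(t/G)$ apiece. Your write-up is simply more explicit about the auxiliary string and the bookkeeping for locating wildcard groups; the underlying idea is identical.
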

\begin{proof}
We build the \LCE data structure of Kosolobov and Sivukhin~\cite{DBLP:journals/corr/abs-2105-03782} for parameter $\tau = nG/t = \Omega(n^\varepsilon)$ in $\cO(n)$ time and $\cO(\tau)$ space.
As an \LCEW query reduces to $G$ \LCE queries, the constructed data structure supports \LCEW queries in $\cO(G \cdot n/\tau) = \cO(t)$ time. 
\end{proof}

By a reduction from Boolean matrix multiplication,
we derive a conditional $\Omega((nG)^{1-o(1)})$ lower bound on the product of the preprocessing and query times of any combinatorial data structure for the \LCEW problem for values of $G$ polynomial in $n$ (\cref{thm:lower-bound}).\footnote{In line with previous work, we say that an algorithm or a data structure is \textit{combinatorial} if it does not
use fast matrix multiplication as a subroutine during preprocessing
or while answering queries.}
This lower bound matches the trade-off of our data structure up to subpolynomial factors.

\begin{table}[h!]
  \centering
  \captionsetup{width=.88\linewidth}
\setlength\extrarowheight{1ex}
\caption{Overview of combinatorial deterministic sparse Boolean matrix multiplication algorithms. The values $\IN$ (resp. $\OUT$) refer to the total number of non-zero entries in the input matrices (resp., in the output matrix).} 
\small
\begin{tabular*}{.85\textwidth}{p{.58\textwidth}p{.38\textwidth}}
    \toprule
    Source & Running Time \\[.5ex]
    \midrule
    Gustavson~\cite{Gustavson78} & $\cO(n\cdot \IN)$ \\
    Kutzkov~\cite{Kutzkov13} & $\cO(n\cdot (n+ \OUT^2))$\\ 
    Künnemann~\cite{Kunnemann18} & $\cO(\sqrt{\OUT} \cdot n^2 + \OUT^2)$ \\
    Abboud, Bringmann, Fischer, Künemann~\cite{abboud2023time} & $\cOtilde(\IN\sqrt{\OUT})$ \\
    \textbf{Our algorithm} & $\cOtilde(n \sqrt{\IN \cdot (n + \OUT)})$\\
    \bottomrule
\end{tabular*}
\label{tab:fmm}
\end{table}

Surprisingly, one can also use the connection between the two problems to derive an algorithm for sparse Boolean matrix multiplication (BMM). Existing algorithms for BMM can be largely categorised into two types: combinatorial and those relying on (dense) fast matrix multiplication. However, the latter are notorious for their significant hidden constants, making them unlikely candidates for practical applicability. By using the connection to the \LCEW problem, we show a deterministic combinatorial algorithm with time complexity $\cOtilde(n \sqrt{\IN \cdot (n + \OUT)})$. Our algorithm ties or outperforms all other known deterministic combinatorial algorithms~\cite{Gustavson78,Kutzkov13,GuchtWWZ15,Kunnemann18}
for some range of parameters $\IN$ and $\OUT$, e.g., when $\IN = \Theta(n^{3/2})$ and $\OUT = \Theta(n^{4/3})$, except for the one implicitly implied by the result of Abboud et al.~\cite{abboud2023time}.
Namely, by replacing fast matrix multiplication (used in a black-box way) in~\cite[Theorem 4.1]{abboud2023time}  with the naive matrix multiplication algorithm, one obtains a deterministic combinatorial algorithm with running time $\cOtilde(\IN \sqrt{\OUT})$, which is always better than our time bound. See \cref{tab:fmm} for a summary. However, our algorithm is much simpler than that of Abboud et al.~\cite{abboud2023time}: while our algorithm relies solely on standard tools typically covered in undergraduate computer science courses, theirs requires an intricate construction of a family of hash functions with subsequent derandomisation. We provide a (non-optimized) proof-of-concept implementation at \url{https://github.com/GBathie/LCEW}.

Finally, we establish a reduction from Set Disjointness to \LCEW using a technique similar to that of Kopelowitz and Vassilevska Williams~\cite{DBLP:conf/icalp/KopelowitzW20}. As a direct consequence, we prove that any--whether combinatorial or not--\LCEW data structure for a string of length $n$ with $G$ groups of wildcards with query time $t$ cannot use $\cO(nG/t^4)$ space and preprocessing time, assuming widely accepted hardness hypotheses (\cref{thm:3sum_and_sd_conditional_lower}).

In conclusion, the three lower bounds (the bound for combinatorial structures from \cref{thm:lower-bound} conditioned on the hardness of combinatorial matrix multiplication and the conditional bounds from \cref{thm:3sum_and_sd_conditional_lower}) strongly suggest that significantly improving our solution for the \LCEW problem is highly unlikely, especially for small values of $t$. We leave as an intriguing open question whether a tight non-combinatorial lower bound exists, which would further solidify our understanding of the problem's complexity.

\paragraph*{Applications.} 
We further showcase the significance of our data structure by using it to improve over the state-of-the-art algorithms for approximate pattern matching and for the construction of periodicity-related arrays for strings containing wildcards. 

As previously mentioned, \LCE queries play a crucial role in string algorithms, especially in approximate pattern matching algorithms, such as for the problem of \emph{pattern matching with $k$ errors} (\kpme, also known as pattern matching with $k$ \emph{edits} or \textit{differences}). This problem involves identifying all positions in a given text where a fragment starting at that position is within edit distance $k$ from a given pattern. The now-classical Landau--Vishkin algorithm~\cite{kangaroo} elegantly solves this problem, achieving a time complexity of $\cO(nk)$ through extensive use of \LCE queries. A natural extension of \kpme is the problem of \textit{pattern matching with wildcards and $k$-errors} (\kpmwe), where the pattern and the text may contain wildcards. The algorithm of Landau and Vishkin~\cite{kangaroo} for \kpme can be extended to an $\cO(n(k+G))$-time algorithm for \kpmwe in strings with $G$ groups of wildcards (see~\cite{akutsu}). Building on~\cite{kangaroo}, Akutsu~\cite{akutsu} gave an algorithm for \kpmwe that runs in time $\cOtilde(n\sqrt{km})$.\footnote{Throughout this work, the $\tilde{O}(\cdot)$ notation suppresses factors that are polylogarithmic in the total length of the input string(s).} In \cref{thm:alg-pm}, we give an algorithm for \kpmwe with time complexity $O(n(k + \sqrt{Gk \log n}))$, which improves on the algorithms of Akutsu~\cite{akutsu} and Landau and Vishkin~\cite{kangaroo} in the regime where $k \ll G \ll m$.

Periodicity arrays capture repetitions in strings and are widely used in pattern matching algorithms; for instance, see~\cite{DBLP:books/daglib/0020111,DBLP:books/daglib/0020103}.  
The prefix array of a length-$n$ string~$T$ with wildcards stores $\LCEW(1,j)$ for all $1 \le j \le n$.
It was first studied in~\cite{DBLP:conf/stringology/IliopoulosMMPST02}, where an $\cO(n^2)$-time construction algorithm was given. More recently, Iliopoulos and Radoszewski~\cite{iliopoulos2016truly} presented an $O(n\sqrt{n\log n})$-time and $\Theta(n)$-space algorithm.
Another fundamental periodicity array is the border array, which stores the maximum length of a proper border of each prefix of the string.
When a string contains wildcards, borders can be defined in two different ways~\cite{borders,DBLP:conf/stringology/IliopoulosMMPST02}.
A \emph{quantum border} of a string~$T$ is a prefix of $T$ that matches the same-length suffix of $T$, while a \emph{deterministic border} is a border of a string $T'$ that does not contain wildcards and matches~$T$, see~\cref{example:borders}. A closely related notion is that of quantum and deterministic periods and their respective period arrays (see~\cref{sec:prelim} for definitions). 

\begin{example}\label{example:borders}
The maximal length of a quantum border of $T = \texttt{ab} \wild \texttt{bc}$ is 3; note that $\texttt{ab} \wild$ matches $\wild \texttt{bc}$.
The maximal length of a deterministic border of $T$, however, is $0$.
\end{example}

Early work in this area \cite{borders,DBLP:conf/stringology/IliopoulosMMPST02} showed that both variants of the border array can be constructed in $\cO(n^2)$ time.
Iliopoulos and Radoszewski~\cite{iliopoulos2016truly} demonstrated that one can compute the border arrays from the prefix array in $\cO(n)$ time and $\cO(n)$ space, and the period arrays in $\cO(n \log n)$ time and $\cO(n)$ space, thus deriving an $\cO(n \sqrt{n \log n})$-time, $\cO(n)$-space construction algorithm for all four arrays. 
In \cref{thm:arrays}, we give $O(n\sqrt{G} \log n)$-time, $O(n)$-space algorithms for computing the prefix array, as well as the quantum and  deterministic
border and period arrays, improving all previously known algorithms when $G = o(n / \log n)$. 

\section{Preliminaries}
\label{sec:prelim}
A string $S$ of length $n = |S|$ is a finite sequence of $n$ characters
over a finite alphabet~$\Sigma$. The $i$-th character of $S$ is denoted by $S[i]$, for $1\le i \le n$,
and we use $S[i\dd j]$ to denote the \emph{fragment} $S[i]S[i+1]\ldots S[j]$ of $S$
(if $i > j$, then $S[i\dd j]$ is the empty string).
Moreover, we use $S[i\dd j)$ to denote the fragment $S[i\dd j-1]$ of $S$.
A fragment $S[i\dd j]$ is a \textit{prefix} of $S$ if $i =1$ and a \emph{suffix} of $S$ if $j = n$.

In this paper, the alphabet $\Sigma$ contains a special character
$\wild$ that matches every character in the alphabet.
Formally, we define the ``match'' relation,
denoted $\matches $ and defined over $\Sigma\times\Sigma$, as follows: $\forall a,b\in\Sigma : a\matches  b \Leftrightarrow a = b \lor a = \wild \lor b = \wild$. 
Its negation is denoted $a \mismatches b$.
We extend this relation to strings of equal length
by $X\matches Y \Leftrightarrow \forall i=1, \ldots, |X|:  X[i]\matches Y[i]$.

\subparagraph*{Longest common extensions.}
Let $T$ be a string of length $n$, and let $i,j \le n$ be indices.
The longest common extension at $i$ and $j$ in $T$, denoted $\LCE_T(i,j)$
is defined as $\LCE_T(i,j) = \max\{\ell \le \min(n-i, n-j) + 1 : T[i\dd i+\ell) = T[j\dd j+\ell)\}$.
Similarly, the longest common extension \textit{with wildcards} is defined using the $\matches$ relation instead of equality: $\LCEW_T(i,j) = \max\{\ell \le \min(n-i, n-j) + 1: T[i\dd i+\ell) \matches T[j\dd j+\ell)\}$.

We focus on data structures for \LCEW queries inside a string $T$, but our results can easily be extended to answer queries between two strings $P,Q$, denoted $\LCEW_{P,Q}(i,j)$.
If we consider $T = P\cdot Q$,
then for any $i\le |P|$ and $j\le|Q|$, 
we have $\LCEW_{P,Q}(i,j) = \min(\LCEW_{T}(i, j+|P|), |P|-i+1, |Q|-j+1)$.
When the string(s) that we query are clear from the context, we drop the $T$ or $P,Q$ subscripts.

\subparagraph*{Periodicity arrays.} The \emph{prefix array} of a string $S$ of length $n$ is an array $\pi$ of size $n$ such that $\pi[i] = \LCEW(1, i)$.

An integer $b \in[1\dd n]$ is a \textit{quantum border} of $S$ if $S[1\dd b] \matches S[n-b+1 \dd n]$. It is a \textit{deterministic border} of $S$ if there exists a string $X$ \textit{without wildcards} such that $X\matches S$ and $X[1\dd b] = X[n-b+1\dd n]$.
Similarly, an integer $p \le n$ is a \textit{quantum period} of $S$ if for every $i\le n-p, S[i] \matches S[i+p]$, and it is a \textit{deterministic period} of $S$ if there exists a string $X$ \textit{without wildcards} such that $X\matches S$ and for every $i\le n-p, X[i] = X[i+p]$.

\begin{example}
  Consider string $\texttt{ab}\wild \texttt{b}\wild \texttt{bcb}$. Its smallest quantum period is $2$,
  while its smallest deterministic period is $4$.
\end{example}

\noindent For a string $S$ of length $n$, we define the following arrays of length $n$:
\begin{itemize}
  \item the period array $\pi$, where $\pi[i] = \LCEW(1, i)$;
  \item the deterministic and quantum border arrays, $B$ and $B_Q$, where $B[i]$ and $B_Q[i]$ are the largest deterministic and quantum border of $S[1\dd i]$, respectively;
  \item the deterministic and quantum period arrays, $P$ and $P_Q$, such that $P[i]$ and $P_Q[i]$ are the smallest deterministic and quantum periods of $S[1\dd i]$, respectively.
\end{itemize}

\begin{fact}[{Lemmas 12 and 15~\cite{iliopoulos2016truly}}]\label{fact:pref-border}
Given the prefix array of a string $S$,
one can compute the quantum border array and quantum period array
in $\cO(n)$ time and space,
while the deterministic border and period arrays can be computed
in $\cO(n\log n)$ time and $\cO(n)$ space.
\end{fact}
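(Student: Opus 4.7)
The plan is to derive each of the four arrays by a single sweep over $i \in [1 \dd n]$ that queries the prefix array $\pi$ together with a small auxiliary structure.

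\textbf{Quantum case.} An integer $b$ is a quantum border of $S[1 \dd i]$ iff $S[1 \dd b] \matches S[i-b+1 \dd i]$, which by definition of $\pi$ is equivalent to $\pi[i-b+1] \ge b$. Setting $j = i - b + 1$ and $f(j) := j + \pi[j] - 1$, the largest quantum border $B_Q[i]$ corresponds to the smallest $j \in [2 \dd i]$ with $f(j) \ge i$; call this $J(i)$. Because the constraint $f(j) \ge i$ only strengthens as $i$ grows, $J$ is non-decreasing in $i$, so I compute it in total time $O(n)$ by maintaining a single monotone pointer $j^\star$ and, upon each increment of $i$, advancing $j^\star$ past every position for which $f(j^\star) < i$; then $B_Q[i] = i - J(i) + 1$ (or $0$ if $J(i) > i$). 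For the quantum period array, a quantum border of length $b$ is equivalent to a quantum period of length $i-b$ (both conditions reduce to $S[k] \matches S[k + (i-b)]$ for all valid $k$), so $P_Q[i] = i - B_Q[i]$ is obtained in $O(1)$ per entry. In total this gives $O(n)$ time and $O(n)$ space.

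\textbf{Deterministic case.} Here $\matches$ is not transitive, so I instead use the characterization: $p$ is a deterministic period of $S[1 \dd i]$ iff, within every residue class modulo $p$ restricted to $[1 \dd i]$, all non-wildcard characters coincide. I would sweep $i$ from $1$ to $n$, maintaining the set of currently valid candidate periods; when $S[i]$ is a wildcard nothing changes, and when $S[i]$ is non-wildcard, a period $p$ is invalidated iff some earlier position $j = i - kp$ already carries a non-wildcard different from $S[i]$. Checking this reduces to locating, within the residue class $i \bmod p$, the most recent stored non-wildcard, which I would implement via an auxiliary per-residue structure supporting $O(\log n)$ lookups and updates; the surviving periods themselves would be kept in a segment tree ordered by period length so that the minimum valid period can be extracted in $O(\log n)$. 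Summed over all $i$ this yields $O(n \log n)$ time and $O(n)$ space to build $B$. The deterministic period array follows from $P[i] = i - B[i]$, because a wildcard-free witness $X$ with $X \matches S[1\dd i]$ has a border of length $b$ iff it has period $i-b$.

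The main obstacle is tightening the amortization in the deterministic sweep: each candidate period must be invalidated at most once over the entire execution, and the searches inside residue classes must not exceed this budget. This is exactly where the extra $\log n$ factor compared to the quantum case arises, matching the claimed bound; the remaining bookkeeping (inserting newly-available periods and maintaining the per-residue-class records) is routine once the invalidation charging is in place.
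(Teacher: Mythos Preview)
The paper does not prove this statement; it is imported as a black-box fact from Iliopoulos and Radoszewski~\cite{iliopoulos2016truly}, so there is no in-paper argument to compare against. I therefore comment only on the soundness of your sketch.

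Your quantum argument is correct. The bijection between proper borders $b$ of $S[1\dd i]$ and positions $j=i-b+1$ with $f(j)=j+\pi[j]-1\ge i$ holds, and $J$ is indeed non-decreasing: any $j<J(i-1)$ satisfies $f(j)<i-1<i$, so the single forward pointer is valid and the whole scan is $O(n)$. The identity $P_Q[i]=i-B_Q[i]$ follows from the one-to-one correspondence between quantum borders of length $b$ and quantum periods of length $i-b$.

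The deterministic case has a genuine gap that goes beyond ``bookkeeping''. Your residue-class characterisation of deterministic periods is correct, and it is true that each candidate period is invalidated at most once. The problem is the \emph{detection} step. When a non-wildcard $S[i]$ arrives, you propose an $O(\log n)$ lookup \emph{per currently valid period} $p$ to test whether the class $i\bmod p$ already contains a conflicting non-wildcard. But there can be $\Theta(i)$ valid periods at step $i$, so the probes that do \emph{not} result in an invalidation already cost $\Theta(n^2\log n)$ in total; your amortisation only bounds the number of removals, not the number of failed checks. Moreover, keeping a separate ``per-residue'' structure for every candidate period $p$ costs $\sum_p p=\Theta(n^2)$ space, violating the $O(n)$ space bound. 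Finally, note that your deterministic sweep never touches the prefix array $\pi$, whereas the cited result explicitly derives the deterministic arrays \emph{from} $\pi$; this is not a cosmetic difference but a hint that the intended $O(n\log n)$ algorithm exploits structural information in $\pi$ (which already encodes all match/mismatch relations with the prefix) rather than re-testing residue classes from scratch. Closing the gap requires a different mechanism for identifying invalidations, not merely a tighter analysis of the one you outline.
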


\paragraph*{Model of computation.} We work in the standard word RAM model of computation with word size $\Theta(\log N)$, where~$N$ is the size of the input.

\section{Time-Space Trade-off for \LCEW}\label{sec:data-structure}
In this section, we prove~\cref{thm:data-structure}.
Recall that $1 \le t \le G$. 
Following the work of Crochemore et al~\cite{crochemore2015note},
we define \textit{transition positions} in $T$,
which are the positions at which $T$ transitions
from a block of wildcards to a block of non-wildcards characters.
We use $\Tt$ to denote the set of transition positions in $T$.
Formally, a position $i\in [1\dd n]$ is in $\Tt$ if one of the following
holds:
\begin{itemize}
  \item $i = n$,
  \item $i > 1$, $T[i-1] = \wild$ and $T[i] \neq \wild$.
\end{itemize}
Note that as $T$ contains $G$ groups of wildcards, there are at most $G+1$
transition positions, i.e., $|\Tt| = \cO(G)$.
Moreover, by definition, the only transition position $i$ for which $T[i]$ may be a wildcard is $n$.

Our algorithm precomputes the \LCEW information for a subset of evenly distributed
transition positions, called \textit{selected positions} and denoted $\Ss$, 
whose number depends on the parameter~$t$.
The set $\Ss$ contains one in every $t$ transition position in $\Tt$,
along with the last one (which is $n$).
Formally, let $i_1 < i_2 < \ldots < i_r$ denote the transition positions of $T$,
sorted in increasing order,
then $\Ss = \{i_{st+1}: s = 0,\ldots, \lfloor (r-1)/t \rfloor \} \cup \{n\}.$
Let $\lambda$ denote the cardinality of $\Ss$, which is $\cO(G/t)$.

Additionally, for every $i\in[1\dd n]$, we define $\nextpos[i]$ (resp., $\nextsel[i]$) as the distance between $i$ and the next transition position (resp., the next selected position) in $T$.
Formally, $\nextpos[i] = \min\{j-i: j\in\Tt \land j \ge i\}$ and $\nextsel[i] = \min\{j-i: j\in\Ss \land j \ge i\}$. 
These values are well-defined: as $n$ is both a transition and a selected position,
the minimum in the above equations is never taken over the empty set.
Both arrays can be computed in linear time and stored using $\cO(n)$ space.
The array $\nextpos$ can be used to jump from a wildcard to the end of the group of wildcards containing it, due the following property:
\begin{observation}\label{fact:match-to-nextpos}
  Consider some position $i$ such that $T[i] =\wild$.
  Then, for $r = \nextpos[i]$,
  we have $T[i\dd i+r] = \wild^ra$, where $a\in \Sigma \setminus \{\wild\}$, i.e., the fragment from $i$ until the next transition position (exclusive) contains only wildcards.
\end{observation}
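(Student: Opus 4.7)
The plan is to prove the observation by directly unpacking the definitions of transition positions and of the array $\nextpos$. First I would set $r = \nextpos[i]$, so that by definition $i+r$ is the smallest position in $\Tt$ that is at least $i$. The task then splits into two parts: showing that $T[i'] = \wild$ for every $i' \in [i \dd i+r)$, and showing that $T[i+r]$ is a non-wildcard character (i.e., $T[i+r] = a \in \Sigma \setminus \{\wild\}$).

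For the first part, I would consider the maximal contiguous group of wildcards of $T$ containing position $i$; this group exists since $T[i] = \wild$ by hypothesis, and I will denote it by $[\ell \dd \ell']$. The definition of a transition position immediately yields $\ell' + 1 \in \Tt$ (taking $\ell' + 1 = n$ if the group extends to the end of $T$, in which case $n$ is a transition position by the first clause of the definition). Since $\ell' + 1$ is a transition position at or after $i$, and by definition of $\nextpos$ no transition position lies strictly between $i$ and $i+r$, we deduce $i+r \le \ell' + 1$. Conversely, no index in $[\ell \dd \ell']$ itself can be a transition position, because every transition position except possibly $n$ satisfies $T[\cdot] \ne \wild$, while all of $T[\ell], \ldots, T[\ell']$ are wildcards. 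Thus $i+r = \ell' + 1$, which immediately gives $T[i \dd i + r) = \wild^r$.

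For the second part, once we know $i + r = \ell' + 1$, the fact that $\ell' + 1$ is a transition position from a group of wildcards to a non-wildcard character (by the second clause of the definition) gives $T[i+r] \in \Sigma \setminus \{\wild\}$, completing the proof.

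The only subtle point—and the closest thing to an obstacle—is the corner case in which the wildcard group containing $i$ extends all the way to position $n$, so that $\ell' = n$ and $i + r = n$ while $T[n]$ could itself be $\wild$. This is the one situation in which the literal statement $T[i+r] = a \in \Sigma \setminus \{\wild\}$ could fail, so I would either treat it explicitly (noting that the applications of the observation never invoke this case because $\nextpos[i] = 0$ whenever $i = n$, or one simply truncates at $n$) or reformulate the conclusion as ``$T[i \dd \min(i+r, n)] \in \wild^{*} (\Sigma \setminus \{\wild\} \cup \{\varepsilon\})$''. In all other cases the argument above is essentially a one-line unpacking of the definitions.
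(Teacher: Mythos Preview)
Your argument is correct and essentially what the paper has in mind: the observation is stated without proof and is intended as an immediate consequence of the definitions of $\Tt$ and $\nextpos$. Your unpacking via the maximal wildcard group containing $i$ is the natural way to make this precise. You also correctly spotted the boundary case in which the wildcard group extends through position $n$ with $T[n]=\wild$; there the literal conclusion $T[i+r]\in\Sigma\setminus\{\wild\}$ fails, but the paper only ever uses the weaker ``i.e.'' part (that $T[i\dd i+r)$ consists entirely of wildcards), so the applications in \cref{claim:mismatch-or-selected} and \cref{claim:transition} are unaffected.
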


The central component of our data structure is a dynamic programming table,
\jump, which allows us to efficiently answer \LCEW queries when one of the arguments is a selected position.
For each selected position $i$ and each (arbitrary) position $j$, this table stores the distance from $i$ to the last (rightmost) selected position that appears in the longest common extension on the side of $i$, i.e., the last selected position $i'$ for which $T[i\dd i']$ matches $T[j\dd j+i'-i]$.
Formally,
\[\forall i\in\Ss, j\in[1\dd n] :
\jump[i, j] = \max\{i'-i : i'\in\Ss \land i'\ge i \land  T[i\dd i'] \matches T[j\dd j+i'-i]\}.\]
If there is no such selected position $i'$
(which happens when $T[i] \mismatches T[j]$), we set $\jump[i,j] := -\infty$.
This table contains $\lambda \cdot n = \cO(nG/t)$ entries
and allows us to jump from the first to the last selected position in the common extension, thus reducing \LCEW queries to finding longest common extensions \textit{to} and \textit{from} a selected position.

Finally, let $T_\#$ be the string obtained by replacing all wildcards in $T$ with a new character~``$\#$'' that does not appear in $T$.
The string $T_\#$ does not contain wildcards, and for any $i,j\in[1\dd n]$, we have $\LCEW_{T}(i,j) \ge \LCE_{T_\#}(i,j)$.

\subparagraph*{The data structure.} Our data structure consists of
\begin{itemize}
   \item the \jump table,
   \item the arrays \nextpos and \nextsel, and
   \item a data structure for constant-time \LCE queries in $T_\#$, with $\cO(n)$ construction time and $\cO(n)$ space usage (e.g., a suffix tree augmented with a lowest common ancestors data structure~\cite{10.1007/11780441_5}).
 \end{itemize} 

The \jump table uses space $\cO(nG/t)$ and can be computed in time $\cO(n(G/t)\log n)$ (see \cref{sec:compute-jump}), while the \nextpos and \nextsel arrays 
can be computed in $\cO(n)$ time and stored using $\cO(n)$ space.
Therefore, our data structure can be built in $\cO(n + n(G/t)\cdot \log n) = \cO(n(G/t)\cdot \log n)$ time and requires $\cO(n+nG/t) = \cO(nG/t)$ space.
As shown in \cref{sec:queries}, we can use this data structure to answer \LCEW queries in $T$ in time $\cO(t)$, thus proving \cref{thm:data-structure}.

\subsection{Computing the \jump Table}\label{sec:compute-jump}

In this section, we prove the following lemma.
\begin{lemma}\label{lemma:compute-table}
  Given random access to $T$,
  the \jump table can be computed in $\cO(n(G/t)\cdot \log n)$ time
  and $\cO(nG/t)$ space.
\end{lemma}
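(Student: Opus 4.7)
The plan is to fill the \jump table by dynamic programming over the selected positions, processing them from right to left, and using the FFT-based wildcard pattern matching algorithm of Clifford and Clifford~\cite{clifford2007simple} at each step to identify matching alignments in bulk.

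Let $s_1 < s_2 < \ldots < s_\lambda$ denote the selected positions in $\Ss$, and set $d_k := s_{k+1} - s_k$ for $k \in [1 \dd \lambda)$. I would base the computation on the following recurrence, which unpacks the definition of \jump. In the base case $k = \lambda$, one has $\jump[s_\lambda, j] = 0$ if $T[s_\lambda] \matches T[j]$ and $-\infty$ otherwise. For $k < \lambda$,
\[
\jump[s_k, j] =
\begin{cases}
d_k + \jump[s_{k+1}, j + d_k] & \text{if } j + d_k \le n \text{ and } T[s_k\dd s_{k+1}] \matches T[j\dd j+d_k], \\
0 & \text{else if } T[s_k] \matches T[j], \\
-\infty & \text{otherwise}.
\end{cases}
\]
Correctness is a short case distinction: the longest chain of selected positions starting at $s_k$ that matches against $T[j\dd\cdot]$ either stops at $s_k$ itself (yielding $0$ or $-\infty$ depending on whether the first characters match) or it reaches at least $s_{k+1}$, in which case what remains of the chain is itself optimal for the entry $\jump[s_{k+1}, j+d_k]$.

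To execute the recurrence efficiently, for each $k < \lambda$ I need to determine, for all positions $j \in [1\dd n - d_k]$, whether $T[s_k\dd s_{k+1}] \matches T[j\dd j+d_k]$. I would do this with a single invocation of the Clifford--Clifford wildcard matching algorithm using the pattern $T[s_k\dd s_{k+1}]$ of length $d_k + 1$ against the text $T$ of length $n$, producing a bitmask of matching starting positions in $\cO(n \log n)$ time and $\cO(n)$ auxiliary space. Combining this bitmask with the previously-computed row $\jump[s_{k+1}, \cdot]$, the row $\jump[s_k, \cdot]$ can be filled in $\cO(n)$ additional time by a left-to-right scan.

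Summing over the $\lambda = \cO(G/t)$ selected positions gives total running time $\cO(\lambda \cdot n \log n) = \cO(n (G/t) \log n)$. The \jump table itself has $\lambda n = \cO(nG/t)$ entries of a single word each, and since the auxiliary workspace for each Clifford--Clifford invocation uses only $\cO(n)$ space and can be reused between rows, the total space usage is $\cO(nG/t)$, matching the stated bound. The main (mild) obstacle is not algorithmic but verificational: one must carefully handle the boundary cases of the recurrence (namely $j + d_k > n$, and the possibility that $T[s_\lambda] = T[n]$ is itself a wildcard, which makes the base row entirely $0$) and confirm that the ``match/no-match'' dichotomy delivered by Clifford--Clifford is precisely what the recurrence requires.
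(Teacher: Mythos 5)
Your proposal is correct and follows essentially the same route as the paper: a right-to-left dynamic program over the selected positions, where each row is derived from the next using a single Clifford--Clifford invocation with the inter-selected-position fragment as the pattern, giving $\cO(\lambda\, n\log n)$ time and $\cO(\lambda n)$ space. The only (immaterial) difference is that your pattern $T[s_k\dd s_{k+1}]$ includes the right endpoint, which lets you drop the $\max(0,\cdot)$ that the paper needs because its pattern $T[i_r\dd i_{r+1})$ excludes it.
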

\begin{proof}
To compute the \jump table, we leverage the algorithm of Clifford and Clifford for exact pattern matching with wildcards~\cite{clifford2007simple}.
This algorithm runs in time $\cO(n \log m)$, and finds all occurrences of a
pattern of length $m$ within a text of length $n$ (both may contain wildcards).

Let $i_1 < i_2 < \ldots < i_\lambda = n$ denote the \textit{selected} positions,
sorted in the increasing order.
For $r = 1,\ldots,{\lambda-1}$, let $P_r$ be the fragment of $T$
from the $r$-th to the $(r+1)$-th selected position (exclusive),
i.e.,  $P_r = T[i_r\dd i_{r+1})$, and let $\ell_r$ denote the length of $P_r$,
that is $\ell_r = |P_r| = i_{r+1} - i_r$.
Then, for every $r$, we use the aforementioned algorithm of Clifford and Clifford~\cite{clifford2007simple} to compute the occurrences of $P_r$ in $T$: it returns an array $A_r$ such that $A_r[i] = 1$ if and only if $T[i\dd i+\ell_r) \matches P_r$.

Using the arrays $(A_r)_r$, the \jump table can then be computed with a dynamic programming approach, in the spirit of the computations in~\cite{crochemore2015note}.
The base case is $i = i_\lambda$, for which we have, for all $j\in[1\dd n]$, $\jump[i_\lambda, j] = 0$ if $T[i_\lambda] \matches T[j]$ and $-\infty$ otherwise. 
We can then fill the table by iterating
over all pairs $(r,j)\in [1\dd \lambda-1]\times[1\dd n]$ in the reverse lexicographical order
and using the following recurrence relation:
\begin{equation}\label{eq:jump}
\jump[i_r, j] = 
\begin{cases}
  -\infty &\text{ if } T[i_r] \mismatches T[j]\\
  \max(0, \ell_r + \jump[i_{r+1}, j+\ell_r])& \text{if } T[i_r] \matches T[j], A_r[j] = 1\\
  0 & \text{ otherwise.}\\
\end{cases}
\end{equation}

Computing the arrays $(A_r)_r$
takes $\cO(\lambda \cdot n\log n) = \cO(n (G/t) \cdot \log n)$ time in total.
Computing the $\jump$ table from the arrays takes constant time per cell,
and the table contains $\lambda \cdot n = \cO(nG/t)$ cells.
Thus, the $\jump$ table can be computed in time $\cO(n (G/t) \cdot \log n)$.
\end{proof}

\subsection{Answering \LCEW Queries}\label{sec:queries}
Our algorithm for answering \LCEW queries can be decomposed into the following steps:
\begin{enumerate}[(a)]
  \item \label{step:start} move forward in $T$ until we reach a selected position or a mismatch,
  \item \label{step:jump} use the \jump table to skip to the last selected position in the longest common prefix on the side of the selected position,
  \item \label{step:end} move forward until we either reach a mismatch or the end of the text.
\end{enumerate}
Steps (\ref{step:start}) and (\ref{step:jump}) might have to be performed twice, one for each of the ``sides'' of the query. 
Steps (\ref{step:start}) and (\ref{step:end})
can be handled similarly,
using \LCE queries in $T_\#$ to move forward multiple positions at a time:
see \cref{alg:subroutine} for a pseudo-code implementation of these steps.
We provide a pseudo-code implementation of the query procedure as \cref{alg:query}.

\begin{center}
\begin{minipage}{0.85\textwidth}
    \begin{algorithm}[H]
\caption{Subroutine for $\LCEW$ queries}
\label{alg:subroutine}
\begin{algorithmic}[1]
\Function{NextSelectedOrMismatch}{$i, j$}
  \State $\ell \gets 0$
  \State $m\gets \min(\nextsel[i], \nextsel[j])$
  \While{$T[i+\ell] \matches T[j+\ell]$ and $i+\ell\notin\Ss$ and $j+\ell\notin\Ss$}
    \State $r \gets \LCE_{T_\#}(i+\ell, j+\ell)$\label{line:lce}
    \State $\ell \gets \min(\ell+r, m)$\label{line:min1}
    \State $d \gets 0$
    \If{$T[i+\ell] =\wild$}
      \State $d \gets \max(d, \nextpos[i+\ell])$
    \EndIf
    \If{$T[j+\ell] =\wild$}
      \State $d \gets \max(d, \nextpos[j+\ell])$
    \EndIf
    \State $\ell\gets \min(\ell+d, m)$\label{line:min2}
  \EndWhile
  \State \Return $\ell$
\EndFunction
\end{algorithmic}
\end{algorithm}
\end{minipage}
\end{center}

\subparagraph*{Analysis of the \textsc{NextSelectedOrMismatch} subroutine (\cref{alg:subroutine}).}
\cref{alg:subroutine} computes a value $\ell$
such that $T[i\dd i+\ell) \matches T[j\dd j+\ell)$,
and either $T[i +\ell] \mismatches T[j +\ell]$ or at least one of $i+\ell, j+\ell$ is a selected position. In the latter case, $i+\ell$ (resp.~$j+\ell$) is the first selected position after $i$ (resp.~$j$).
Furthermore, \cref{alg:subroutine} runs in time $\cO(t)$.
These properties are formally stated and proved below.

\begin{restatable}{lemma}{mismatchorselected}
\label{claim:mismatch-or-selected}
  Let $\ell$ be the value returned by \cref{alg:subroutine}.
  We have $T[i\dd i+\ell) \matches T[j\dd j+\ell)$.
\end{restatable}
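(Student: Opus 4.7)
The plan is to prove the claim by establishing the loop invariant
\[
  T[i\dd i+\ell) \matches T[j\dd j+\ell)
\]
at every point during the execution of \cref{alg:subroutine}. The invariant trivially holds before the first iteration because $\ell = 0$ makes both fragments empty, and upon termination it yields exactly the statement of the lemma. Thus it suffices to verify that each of the two updates to $\ell$ inside the loop body preserves it.

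For the update on line \ref{line:min1}, let $r = \LCE_{T_\#}(i+\ell, j+\ell)$. Since $\#$ occurs in $T_\#$ exactly at the wildcard positions of $T$ and nowhere else, the literal equality $T_\#[i+\ell\dd i+\ell+r) = T_\#[j+\ell\dd j+\ell+r)$ implies that at every offset $k < r$, either both of $T[i+\ell+k]$ and $T[j+\ell+k]$ are the same non-wildcard letter, or both are $\wild$; in either case $T[i+\ell+k] \matches T[j+\ell+k]$. Concatenating with the match already guaranteed by the invariant extends it to length $\ell + r$, and capping at $m$ only shrinks the range, so the invariant survives.

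For the update on line \ref{line:min2}, the value $d$ is positive only when $T[i+\ell] = \wild$ or $T[j+\ell] = \wild$. By \cref{fact:match-to-nextpos}, whenever $T[i+\ell] = \wild$ the fragment $T[i+\ell \dd i+\ell+\nextpos[i+\ell])$ consists solely of wildcards, and analogously for $j$. Taking $d$ as the maximum of the contributing $\nextpos$ values ensures that for every offset $k < d$, at least one of $T[i+\ell+k]$, $T[j+\ell+k]$ is a wildcard; since a wildcard matches any character, the $\matches$ relation holds pointwise, and the invariant is preserved (the cap at $m$ is again harmless).

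The main subtlety lies in the wildcard-skip step when both $T[i+\ell]$ and $T[j+\ell]$ are wildcards but their groups have different lengths: one must verify that, on the range where the shorter wildcard group has already ended, the longer one still provides wildcards to absorb whatever non-wildcard character now appears on the other side. This is precisely why $d$ is defined as a maximum rather than a minimum, and \cref{fact:match-to-nextpos} applied to the side with the longer wildcard run closes the gap. All remaining bookkeeping is routine.
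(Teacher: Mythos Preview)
Your proof is correct and follows essentially the same approach as the paper: both establish the invariant $T[i\dd i+\ell) \matches T[j\dd j+\ell)$ by induction over the loop, handling the $\LCE_{T_\#}$ step and the wildcard-skip step separately and invoking \cref{fact:match-to-nextpos} for the latter. The only cosmetic difference is that the paper phrases the wildcard step as ``one of the two length-$d$ fragments is entirely wildcards'' whereas you phrase it position-wise; these are equivalent here, and your final paragraph makes the connection explicit.
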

\begin{proof}
  We prove that $T[i\dd i+\ell) \matches T[j\dd j+\ell)$
  by induction on the number of iterations of the \texttt{while} loop.
  At the start of the algorithm, we have $\ell = 0$, $i = i+\ell$ and $j=j+\ell$,
  hence the base case holds.

  Now, assume that at the start of some iteration of the \texttt{while} loop,
  we have $T[i\dd i+\ell) \matches T[j\dd j+\ell)$.
  In Line~\ref{line:lce}, $r =\LCE_{T_\#}(i+\ell, j+\ell)$,
  and hence $T_\#[i+\ell\dd i+\ell+r)$ is equal to $T_\#[j+\ell\dd j+\ell+r)$,
  and \emph{a fortiori}, the same thing is true in $T$,
  i.e., $T[i+\ell\dd i+\ell+r) \matches T[j+\ell\dd j+\ell+r)$.
  Combining the above with our invariant hypothesis,
  we obtain that $T[i\dd i+\ell+r) \matches T[j\dd j+\ell+r)$.
  Therefore, after Line~\ref{line:min1} is executed, we have $T[i\dd i+\ell) \matches T[j\dd j+\ell)$
  for the new value of $\ell$.
  By \cref{fact:match-to-nextpos}, at least one of $T[i+\ell\dd i+\ell+d)$ or $T[j+\ell\dd j+\ell+d)$
  consists only of wildcards, therefore these two fragments match,
  and, before executing Line~\ref{line:min2}, we have $T[i\dd i+\ell+d) \matches T[j\dd j+\ell+d)$.
  Finally, after executing Line~\ref{line:min2}, the above becomes $T[i\dd i+\ell) \matches T[j\dd j+\ell)$,
  and our induction hypothesis holds.
\end{proof}

The fact that either $T[i +\ell] \mismatches T[j +\ell]$ or at least one of $i+\ell, j+\ell$ is a selected position follows from the exit condition of the \texttt{while} loop.
If one of them is a selected position, the minimality of its index
follows from using $m = \min(\nextsel[i], \nextsel[j])$ to bound the value of $\ell$ throughout the algorithm.
This concludes the proof of the correctness of \cref{alg:subroutine}.

We next analyse the running time of \cref{alg:subroutine}.

\begin{lemma}\label{lem:nsorm}
After an $\cO(n)$-time preprocessing of $T$, \cref{alg:subroutine} runs in time $\cO(t)$.
\end{lemma}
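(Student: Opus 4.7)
The plan is to split the bound into three pieces: an $\cO(n)$-time preprocessing, $\cO(1)$ work per iteration of the \texttt{while} loop, and an $\cO(t)$ bound on the number of iterations. Combined, these yield the claimed $\cO(t)$ running time.

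For preprocessing, I would build a suffix tree for $T_\#$ augmented with a constant-time lowest common ancestor data structure in $\cO(n)$ time and space, supporting $\cO(1)$-time \LCE queries on $T_\#$. The arrays \nextpos and \nextsel can be computed by single right-to-left scans of $T$ in $\cO(n)$ time and space. Inside the \texttt{while} loop, every line is a constant-time arithmetic operation, array lookup, or \LCE query, so each iteration runs in $\cO(1)$ time.

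The heart of the argument is bounding the number of iterations. Since $\ell$ is capped by $m = \min(\nextsel[i], \nextsel[j])$, the pointers $i+\ell$ and $j+\ell$ stay inside $T[i\dd i+\nextsel[i]]$ and $T[j\dd j+\nextsel[j]]$, respectively. By the definition of $\Ss$ as every $t$-th transition position, each of these two fragments contains at most $t$ elements of $\Tt$. I will then show that, in every iteration that does not terminate the loop, $i+\ell$ or $j+\ell$ moves strictly past a transition position that has not been crossed before, so the total number of iterations is at most $2t + \cO(1) = \cO(t)$.

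To establish this per-iteration progress, I will exploit the fact that any mismatch in $T_\#$ is either a genuine mismatch in $T$ (neither side is $\wild$) or a wildcard-against-non-wildcard pair, since two $\wild$'s are both replaced by $\#$ in $T_\#$ and hence match there. Thus after the \LCE call in Line~\ref{line:lce}, either $\ell + r \ge m$ and $\ell$ is truncated to $m$---putting one of $i+\ell, j+\ell$ on a selected position (itself a transition position) and terminating the loop on the next condition check---or a mismatch is reached in $T_\#$, which is either a real mismatch in $T$ (loop terminates without further iteration) or a point where exactly one of $T[i+\ell]$, $T[j+\ell]$ equals $\wild$. In the latter case, the ensuing \nextpos jump of length $d \ge 1$ advances that pointer precisely to the transition position at the end of its wildcard group (or past it, if truncated by $m$). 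I expect the main obstacle to be keeping the bookkeeping clean through the corner cases (for instance, when both the \LCE advance and the \nextpos jump are truncated by $m$ in the same iteration, or the initial situation where $i$ or $j$ already lies in $\Ss$), but no deeper idea beyond the transition-counting potential argument seems to be needed.
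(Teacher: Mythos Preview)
Your proposal is correct and follows essentially the same approach as the paper: an $\cO(n)$ preprocessing for constant-time \LCE queries on $T_\#$, $\cO(1)$ work per iteration, and a potential argument that each non-terminating iteration consumes a fresh transition position on one of the two sides, of which there are at most $t$ per side. The paper organises the progress argument into two intermediate claims (after Line~\ref{line:min1} we are at a mismatch, a selected position, or a wildcard; after Line~\ref{line:min2} we are at a mismatch or a transition position), but your case analysis---mismatch in $T_\#$ is either a real mismatch or exactly one side is $\wild$, and then \nextpos lands on a transition position---is the same reasoning in slightly different packaging.
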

\begin{proof}
We preprocess $T$ in $\cO(n)$ time so that $\LCE$ queries on $T_\#$ can be answered in $\cO(1)$ time~\cite{10.1007/11780441_5}.

It then suffices to bound the number of iterations of the \texttt{while} loop.
We do so using the following two claims.

\begin{claim}
\label{claim:progress}
  In Line~\ref{line:min1} of \cref{alg:subroutine}, 
  either $T[i +\ell] \mismatches T[j +\ell]$,
  or (at least) one of $T[i +\ell], T[j +\ell]$ is a selected position or a wildcard.
\end{claim}
\begin{proof}
  In Line~\ref{line:lce}, $r$ is the \LCE of $T_\#[i +\ell\dd n]$ and $T_\#[j +\ell\dd n]$,
  and therefore $T_\#[i +\ell+r] \neq T_\#[j +\ell+r]$.
  Then, in Line~\ref{line:min1} the value of $\ell$
  is set to either $\ell+r$ or $m$.

  In the former case, we either have $T[i +\ell] \mismatches T[j +\ell]$ (and we are done) or  $T[i +\ell] \matches T[j +\ell]$, and one of $T[i +\ell], T[j +\ell]$ is a wildcard
  as $T_\#[i +\ell+r] \neq T_\#[j +\ell+r]$.

  In the latter case, i.e., if $\ell$ is set to $m = \min(\nextsel[i], \nextsel[j])$
  in Line~\ref{line:min1}, then, by the definition of $\nextsel$,
  at least one of $T[i +\ell], T[j +\ell]$ is a selected position.
\end{proof}

\begin{claim}
\label{claim:transition}
  In Line~\ref{line:min2} of \cref{alg:subroutine}, 
  either $T[i +\ell] \mismatches T[j +\ell]$,
  or (at least) one of $T[i +\ell], T[j +\ell]$ is a transition position.
\end{claim}
\begin{proof}
  By \cref{claim:progress}, we have that in Line~\ref{line:min1} of \cref{alg:subroutine}, 
  either $T[i +\ell] \mismatches T[j +\ell]$,
  or at least one of $T[i +\ell], T[j +\ell]$ is a selected position or a wildcard.
  We consider three sub-cases.

  If $T[i +\ell] \mismatches T[j +\ell]$ in Line~\ref{line:min1},
  then neither $T[i +\ell]$ nor $T[j +\ell]$ 
  is a wildcard, and $d$ is $0$ in Line~\ref{line:min2}. Hence, the value of $\ell$ does not change.

  If one of $T[i +\ell], T[j +\ell]$ is a selected position in Line~\ref{line:min1},
  then we have $\ell = m$ by the minimality of $m$,
  and $\ell$ will be set to the same value in line~\ref{line:min2} regardless of the value of $d$.

  Finally, assume that one of $T[i +\ell], T[j +\ell]$ is a wildcard in Line~\ref{line:min1}.
  Then for any $p\in\{i +\ell, j +\ell\}$ such that $T[p]$ is a wildcard,
  $T[p + \nextpos[p]]$ is a transition position (by the definition of $\nextpos$).
  Before executing Line~\ref{line:min2}, $d$ is the maximum of these $\nextpos[p]$, and
  hence one of $T[i +\ell + d]$ or $T[j +\ell + d]$ is a transition position.
\end{proof}

By \cref{claim:transition}, the number of transition positions between $i+\ell$ or $j+\ell$
and the corresponding next selected position decreases by at least one
(or the algorithm exits the loop and returns).
The use of $m$ in Lines~\ref{line:min1} and~\ref{line:min2}
ensures that we cannot go over a selected position,
and, by construction, there are at most $t$ transition positions between $i$ or $j$ and the next selected position,
therefore \cref{alg:subroutine} goes through at most $2t$ iterations of the loop.
Each iteration consists of one \LCE query in $T_\#$ and a constant number of constant-time operations.
Hence, \cref{alg:subroutine} takes time $\cO(t)$ overall.
\end{proof}

\subparagraph*{\LCEW query algorithm.}
Let $\ell$ denote the result of \cref{alg:query} on some input $(i,j)$.
The properties of \cref{alg:subroutine} ensure that $T[i\dd i+\ell) \matches T[j\dd j+\ell)$.
As the algorithm returns when it either encounters a mismatch or reaches the end of the string,
the matching fragment cannot be extended, which ensures the maximality of $\ell$.
To prove that \cref{alg:query} runs in time $\cO(t)$,
we show that it makes a constant number of loop iterations.

\begin{center}
\begin{minipage}{0.85\textwidth}
\begin{algorithm}[H]
\caption{Algorithm to answer the query $\LCEW(i, j)$}
\label{alg:query}
\begin{algorithmic}[1]
\Function{LCEW}{$i,j$}
  \State $\ell \gets 0$
  \While{$i+\ell \le n$ and $j+\ell \le n$}
    \State $\ell \gets \Call{NextSelectedOrMismatch}{i+\ell,j+\ell}$\label{line:subroutine}
    \If{$T[i+\ell] \mismatches T[j+\ell]$}
      \State \Return $\ell$\label{line:mismatch}
    \EndIf
    \If{$i+\ell\in\Ss$}
      \State $\ell \gets \ell + \jump[i+\ell, j+\ell]+1$\label{line:base1}
    \Else
      \State $\ell \gets \ell + \jump[j+\ell, i+\ell]+1$\label{line:base2}
    \EndIf
  \EndWhile
  \State \Return $\ell$
\EndFunction
\end{algorithmic}
\end{algorithm}
\end{minipage}
\end{center}

\begin{restatable}{lemma}{threeiter}
\label{claim:three-iter}
  The \texttt{while} loop of \cref{alg:query} makes at most three iterations.
\end{restatable}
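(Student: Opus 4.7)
Our plan is to show that in each iteration of the while loop that does not return, the algorithm performs a jump on a side ($i$ or $j$) that was not yet jumped on, so that after at most three iterations the loop must exit via the mismatch check. The analysis hinges on an invariant of Algorithm~\ref{alg:query}: both \textsc{NextSelectedOrMismatch} and the jump update (Lines~\ref{line:base1}--\ref{line:base2}) advance $i+\ell$ and $j+\ell$ by the same amount, so the offset $(j+\ell)-(i+\ell)$ is preserved throughout execution. Consequently, the ultimate mismatch (or end of string) occurs at fixed absolute positions $p_i$ and $p_j$ with $p_i - i = p_j - j = \LCEW(i,j)$.

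The central lemma is an \emph{elimination property} for jumps: suppose that in some iteration we jump on the $i$-side from a selected position $i^* \in \Ss$ and land at $i^{**}+1$. Then the next selected $i$-side position $i^{***} > i^{**}$ satisfies $i^{***} \ge p_i$. Indeed, if $i^{***} < p_i$, then $T[i^* \dd i^{***}]$ would still match the aligned fragment (since the first mismatch under the preserved alignment lies at $p_i$), so $i^{***}$ would be a selected position strictly larger than $i^{**}$ that still lies in the common extension, contradicting the maximality in the definition of $\jump[i^*, \cdot]$. The symmetric statement holds for a jump on the $j$-side.

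The iteration count then follows. Iteration~1 either terminates via Line~\ref{line:mismatch}, or (by symmetry, say) jumps on the $i$-side; the elimination property forces the next $i$-selected to be at or past $p_i$. In iteration~2, \textsc{NextSelectedOrMismatch} halts at the minimum of three offsets---to the mismatch, to the next $i$-selected, and to the next $j$-selected. Since the $i$-side is now ``exhausted,'' this minimum is either the mismatch (in which case Line~\ref{line:mismatch} returns) or a $j$-selected, triggering a $j$-side jump whose elimination property forces the next $j$-selected to be at or past $p_j$. In iteration~3 all three candidate halt-offsets are at or past the mismatch, so \textsc{NextSelectedOrMismatch} must halt exactly at the mismatch offset (possibly coinciding with a selected position, which does not prevent the mismatch check from firing), and the loop returns. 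The main technical hurdle is stating and proving the elimination property correctly; the rest---including the offset bookkeeping across two jumps, and edge cases such as the end of the string (handled because $n \in \Ss$) or coincidence of a selected position with the mismatch---follows routinely from the preserved-alignment invariant.
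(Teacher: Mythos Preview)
Your proposal is correct and follows essentially the same approach as the paper's own proof: the paper's key sentence ``there is no selected position between $i+\ell$ and the end of the longest common extension with wildcards'' is exactly your elimination property, and the paper likewise concludes by noting that this can happen at most once per side. Your version is more explicit about the iteration-by-iteration bookkeeping and the edge cases, but the underlying argument is identical.
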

\begin{proof}
  After the call to \cref{alg:subroutine} in Line~\ref{line:subroutine},
  either $T[i +\ell] \mismatches T[j +\ell]$ or at least one of $i+\ell, j+\ell$ is a selected position.
  In the former case, this is the last iteration of the loop.
  In the latter case, suppose without loss of generality that $i+\ell$ is a selected position.
  Then, in Line~\ref{line:base1}, the value of $\ell$ is updated to $\jump[i+\ell, j+\ell]+1$,
  and there is no selected position between $i+\ell$ and the end of the longest common extension with wildcards.
  This can happen at most once for each of $i$ and $j$,
  and thus the loop goes through at most three iterations before exiting. 
\end{proof}

Therefore, \cref{alg:query} makes up to three calls to \cref{alg:subroutine} plus a constant number of operations, and thus runs in time $\cO(t)$ due to \cref{lem:nsorm}.

\section{Faster Approximate Pattern Matching and Computation of Periodicity Arrays construction}\label{sec:applications}
In this section, we use the data structure of \cref{thm:data-structure}
to derive improved algorithms for the \kpmwe problem and the problem of computing periodicity arrays of strings with wildcards.

\subsection{Faster Pattern Matching with Errors and Wildcards}\label{sec:pattern-matching}
We first consider the problem of pattern matching with errors,
where both the pattern and the text may contain wildcards. The edit distance between two strings $X,Y \in \Sigma^\ast$, denoted by $\ed(X,Y)$, is the smallest number of insertions, deletions, and substitutions of a character in $\Sigma$ by another character in $\Sigma$, required to transform $X$ into a string matching $Y$. This problem is formally defined as follows:

\problemoutput{$k$-PMWE}%
{A text $T$ of length $n$, a pattern $P$ of length $m$ and an integer threshold $k$.}%
{Every position $p$ for which there exists $i\le p$ such that $\ed(T[i\dd p], P) \le k$.}

Akutsu~\cite{akutsu} gave an algorithm for this problem
that runs in time $\cOtilde(n\sqrt{km})$.
Using their framework, we show that the complexity can be reduced to $\cO(n(k+\sqrt{kG \log m}))$, where $G$ is 
the cumulative number of groups of wildcards in $P$ and $T$
(or equivalently, the number of groups of wildcards in $P\$T$).\footnote{The additive ``$k$'' term in our complexity is necessary
because $G\log m$ might be smaller than $k$. On the other hand,
one can assume w.l.o.g.~that $m\ge k$ and hence, this additional term is hidden
in the running time of Akutsu's algorithm~\cite{akutsu}.}

\begin{theorem}\label{thm:alg-pm}
  There is an algorithm for \kpmwe that runs in time $\cO(n (k + \sqrt{kG \log m}))$.
\end{theorem}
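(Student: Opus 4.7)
The plan is to adapt the Landau--Vishkin kangaroo-jumping framework for $k$-PME to the wildcard setting, replacing ordinary \LCE queries by \LCEW queries answered with the data structure of \cref{thm:data-structure}, and then to balance the preprocessing cost against the query cost. Recall that for a pattern of length $m$ and a text of length $n$, the Landau--Vishkin DP fills a table indexed by (diagonal, error count) with $\cO(nk)$ entries in total, each entry requiring a single \LCE-type query on suffixes of (a concatenation of) the pattern and the text to extend the current error-free run. In our setting this query becomes an \LCEW query on the string $P \$ T$ (where $\$ \notin \Sigma \cup \{\wild\}$ is fresh); correctness of kangaroo jumping carries over verbatim, since the notion of a match already accommodates wildcards.

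To avoid paying a $\log n$ factor instead of $\log m$ and to keep the preprocessing linear in $n$ rather than in $nm$, I would apply the standard partition trick: cover $T$ by $\cO(n/m)$ overlapping blocks of length $2m$ (consecutive blocks sharing $m$ characters), so that every candidate approximate occurrence (of length at most $m+k \le 2m$) lies entirely inside some block. For each block $B$ we build our \LCEW data structure on $P \$ B$, a string of length $\cO(m)$ with at most $G_P + G_B$ wildcard groups, chosen with a common parameter $t$. Summing the preprocessing cost from \cref{thm:data-structure} over all blocks, using that each group of $T$ belongs to $\cO(1)$ blocks and using $m \le n$, gives total preprocessing $\cO\!\bigl((nG_P + mG_T)\cdot (1/t)\log m\bigr) = \cO(nG\log m / t)$, where $G = G_P + G_T$. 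Running Landau--Vishkin inside each block costs $\cO(mk)$ table fills plus $\cO(mk)$ \LCEW queries of $\cO(t)$ time each, for a total of $\cO((n/m)\cdot mk\cdot(1+t)) = \cO(nk + nkt)$ across all blocks.

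Adding the two contributions yields an overall running time of $\cO\!\bigl(nk + nG\log m / t + nkt\bigr)$, and choosing $t = \sqrt{G\log m / k}$ balances the two $t$-dependent terms to $\cO(n\sqrt{kG\log m})$, giving the claimed bound $\cO\!\bigl(n(k + \sqrt{kG\log m})\bigr)$. The main obstacle lies in handling the corner cases where this optimal $t$ is out of range: if $G\log m < k$ then $t < 1$, in which case I use $t = 1$ and the cost reduces to $\cO(nk + nG\log m) = \cO(nk)$; if the balanced $t$ exceeds $G$, I instead apply \cref{cor:extended-tradeoff} to support \LCEW in $\cO(t)$ time with only $\cO(n)$ preprocessing per block, so the $nG\log m / t$ term is absorbed into $\cO(n)$ per block (totalling $\cO(n)$) and the balancing still goes through. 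A minor additional subtlety is that Landau--Vishkin computes $\LCEW$ between fragments of $P$ and fragments of $T$ starting at arbitrary offsets; this is why we construct the data structure on $P \$ B$, so that any suffix-suffix \LCEW call reduces to a single \LCEW query inside one string, capped by the relevant lengths as discussed at the end of \cref{sec:prelim}.
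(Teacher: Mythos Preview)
Your approach is essentially the paper's: it too builds the data structure of \cref{thm:data-structure} on $\cO(n/m)$ overlapping length-$\cO(m)$ fragments of $T$ (citing Akutsu's black-box reduction~\cite[Proposition~1]{akutsu} rather than re-deriving Landau--Vishkin), balances $t = \sqrt{(G/k)\log m}$, and sets $t=1$ when $G\log m < k$; your explicit treatment of the case $t > G$ via \cref{cor:extended-tradeoff} is in fact more careful than the paper's. Two small slips to repair: blocks of length $2m$ with overlap $m$ do \emph{not} contain every length-$(m{+}k)$ window (take overlap $m{+}k$, as the paper's fragments $T[im{+}1 \dd \lceil(i{+}3/2)m\rceil{+}k]$ effectively do), and a wildcard group longer than $m$ can intersect more than $\cO(1)$ blocks---but the resulting extra $\cO(n/m)$ group--block incidences contribute only $\cO((n/t)\log m)$ to the preprocessing, which is absorbed into $\cO((nG/t)\log m)$ since $G \ge 1$.
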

\begin{proof}
  Akutsu~\cite[Proposition 1]{akutsu} 
  shows that, if after an $\alpha$-time preprocessing, 
  \LCEW queries between $P$ and $T$ can be answered in time $\beta \ge 1$,
  then the \kpmwe problem can be solved in time $\cO(\alpha + n\beta k)$.
  First, assume that $G\log m \ge k$.
  We use the data structure of \cref{thm:data-structure}
  with $t = \sqrt{(G/k) \cdot \log m} \ge 1$ to answer \LCEW queries:
  we then have $\alpha = \cO(n \sqrt{Gk \log m})$ (here, we use the \emph{standard trick} to replace the $\log n$ factor in the construction time with $\log m$ if $n \ge 2m$, by considering $\cO(n/m)$ fragments of $T$ of the form $T[i\cdot m + 1 \dd \lceil(i+3/2)\cdot m\rceil +k]$ and building an $\LCEW$ data structure for the concatenation of $P$ and each such fragment independently) 
  and $\beta = \cO(t) = \cO(\sqrt{(G/k)\cdot\log m})$.
  Therefore, the running time of the algorithm is $\cO(n \sqrt{Gk \log m})$.
  Second, if $G\log m < k$, we simply set $t = 1$:
  the total running time is then $\cO(nG \log m + nk) = \cO(nk)$.
  Accounting for both cases, the time complexity of this algorithm
  is $\cO(n(k+\sqrt{G k \log m}))$.\qedhere
\end{proof}

\subsection{Faster Computation of Periodicity Arrays}\label{sec:per-arrays}
Our data structure also enables us to obtain efficient algorithms for computing periodicity arrays of a string with wildcards (\cref{thm:arrays}).
These algorithms build on and improve upon the results of
Iliopoulos and Radoszewski~\cite{iliopoulos2016truly}.

\begin{theorem}\label{thm:arrays}
  Let $S$ be a string of length $n$ with $G$ groups of wildcards.
  The prefix array, the quantum and deterministic border arrays
  and the quantum and deterministic period arrays
  of $S$ can be computed in $\cO(n\sqrt{G}\log n)$ time and $\cO(n)$ space. 
\end{theorem}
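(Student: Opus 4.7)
The prefix array is the pivotal object: by \cref{fact:pref-border}, once $\pi$ is in hand, the quantum border and period arrays follow in $\cO(n)$ additional time and space, and the deterministic border and period arrays in $\cO(n \log n)$ additional time and $\cO(n)$ space. The problem therefore reduces to computing $\pi$, with $\pi[j] = \LCEW(1, j)$, within the claimed $\cO(n\sqrt{G}\log n)$ time and $\cO(n)$ space.

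The natural plan is to instantiate the \LCEW data structure of \cref{thm:data-structure} with $t = \sqrt{G}$, which gives construction time $\cO(n\sqrt{G}\log n)$ and query time $\cO(\sqrt{G})$; the $n$ queries $\LCEW(1, j)$ then take $\cO(n\sqrt{G})$ total time. The only issue is that the $\jump$ table uses $\cO(n\sqrt{G})$ space, overshooting the $\cO(n)$ budget.

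To obtain $\cO(n)$ space, I would process the $n$ queries offline, interleaving them with a row-by-row reverse construction of $\jump$. Two facts make this feasible: (i) by \cref{claim:three-iter}, each execution of \cref{alg:query} accesses at most two $\jump$ entries and makes at most three calls to \cref{alg:subroutine}; and (ii) the recurrence in \eqref{eq:jump} computes row $r$ from row $r+1$ alone, so only two consecutive rows (each of size $\cO(n)$) need be retained, and \cref{alg:subroutine} only relies on the $\cO(n)$-size auxiliary structures (the \LCE data structure on $T_\#$, together with $\nextpos$ and $\nextsel$). I would perform two reverse sweeps over $r = \lambda, \lambda - 1, \ldots, 1$. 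At the start of Sweep~1, I run \cref{alg:subroutine} on each $(1,j)$, which either resolves the query (mismatch or end of string) or identifies its first $\jump$-entry need $(r_1, j'_1)$. Then I iterate $r$ downward from $\lambda$, computing $A_r$ via Clifford--Clifford in $\cO(n \log n)$ time and filling row $r$ via \eqref{eq:jump}; as each row becomes available, every parked query whose first need sits in row $r$ is woken, its first jump is applied, and \cref{alg:subroutine} is invoked once more to either terminate it or record a second need $(r_2, j'_2)$. Sweep~2 repeats the row-by-row construction and answers the second-jump lookups, after which one last call to \cref{alg:subroutine} per query locates the final mismatch.

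Each sweep costs $\cO(n\sqrt{G}\log n)$ for building the $\jump$ rows and $\cO(n\sqrt{G})$ for the \cref{alg:subroutine} calls, yielding $\cO(n\sqrt{G}\log n)$ total time. The live memory is $\cO(n)$ throughout: two rows of $\jump$, the current $A_r$, the \LCE structure on $T_\#$, $\nextpos$, $\nextsel$, a constant-size state per parked query, and the output array $\pi$. Applying \cref{fact:pref-border} then yields the remaining four arrays within the same budget. The main obstacle is the offline orchestration itself: a single reverse sweep cannot suffice, since a query's second jump may require a row already discarded within that sweep, and it is exactly \cref{claim:three-iter}'s bound of two jumps per query that makes two sweeps sufficient.
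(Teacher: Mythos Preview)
Your approach is correct and essentially matches the paper's: reduce to the prefix array via \cref{fact:pref-border}, then compute $\pi$ offline by interleaving the $n$ queries $\LCEW(1,j)$ with two reverse sweeps over the rows of $\jump$, retaining only $\cO(1)$ rows at a time. The one simplification the paper adds is to insert position $1$ into $\Ss$, so that every first $\jump$ access is $\jump[1,\cdot]$; the first sweep then simply computes that single row rather than parking queries by row, and your preliminary round of calls to \cref{alg:subroutine} becomes unnecessary.
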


By \cref{fact:pref-border}, it remains to show that the prefix array of $S$ can be computed
in $\cO(n\sqrt{G}\log n)$ time and $\cO(n)$ space. 
Recall that the \emph{prefix array} of a string $S$ of length $n$ is an array $\pi$ of size $n$ such that $\pi[i] = \LCEW(1, i)$. Consequently, $\pi$ can be computed using $n$ \LCEW queries in~$S$.
By instantiating our data structure with $t = \sqrt{G}$,
we obtain an algorithm running in $\cO(n\sqrt{G}\log n)$ time,
but its space usage is $\Theta(n\sqrt{G})$.
Below, we show how one can slightly modify the data structure of \cref{thm:data-structure} to reduce the space complexity to $\cO(n)$,
extending the ideas of~\cite{iliopoulos2016truly}.
\begin{lemma}
  Let $S$ be a string of length $n$ with $G$ groups of wildcards.
  The prefix array of~$S$ can be computed in $\cO(n\sqrt{G}\log n)$ time and $\cO(n)$ space. 
\end{lemma}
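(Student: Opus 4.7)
The plan is to adapt the data structure of \cref{thm:data-structure} instantiated with $t = \sqrt{G}$, avoiding the materialization of the full $\jump$ table, which is the only super-linear-space component (occupying $\cO(n\sqrt{G})$ space). We exploit the fact that every query has the form $\LCEW(1,j)$, so the left endpoint is fixed to $1$, and that we only need to produce the $n$ answers rather than support arbitrary online queries.

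By \cref{claim:three-iter}, \cref{alg:query} executes at most three iterations of its main loop per query, and hence performs at most three lookups into $\jump$. I would therefore organize the computation in three passes, where pass $k \in \{1,2,3\}$ advances every still-active query through its $k$-th iteration. Between passes, each query retains only its current offset $\ell$ and a one-bit ``finished'' flag, using $\cO(n)$ auxiliary space in total.

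Within a single pass I would first, for every pending query, run \textsc{NextSelectedOrMismatch} in $\cO(\sqrt{G})$ time (\cref{lem:nsorm}); this either finishes the query (upon a mismatch) or identifies a single selected position $i_r$ whose row of $\jump$ is needed for the subsequent lookup. I would bucket the pending queries by this row index, in $\cO(n)$ space and $\cO(n\sqrt{G})$ total time. Then I would sweep through the selected positions in the backward order $r = \lambda, \lambda-1, \ldots, 1$ dictated by the recurrence (\ref{eq:jump}), keeping in memory only two consecutive $\jump$ rows and the current Clifford--Clifford output $A_r$, all of size $\cO(n)$. When row $\jump[i_r,\cdot]$ is assembled, I would process every query in its bucket in $\cO(1)$ time, advance its state, and then discard the row. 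The $\cO(n\log n)$ cost of each row times $\lambda = \cO(\sqrt{G})$ rows yields $\cO(n\sqrt{G}\log n)$ time per pass.

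The main obstacle is making sure that each pending query's unique needed row of $\jump$ in a given pass is actually available during that pass, and that the backward recurrence can be carried out with only $\cO(n)$ working memory. The first point is immediate from the bucketing, since each query contributes to exactly one bucket per pass, and the backward sweep visits every bucket. The second point follows because rows are produced in decreasing $r$ while the recurrence for row $r$ accesses only row $r+1$, so a rolling window of size two suffices. Summing over the three passes gives total time $\cO(n\sqrt{G}\log n)$ and peak space $\cO(n)$, which proves the lemma.
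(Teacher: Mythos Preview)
Your proposal is correct and follows essentially the same approach as the paper: compute the $\jump$ table row by row in the backward order dictated by recurrence~(\ref{eq:jump}), keeping only $\cO(1)$ rows in memory, and bucket the pending queries by the single row they need so that each lookup can be served when its row is produced. The paper adds one small simplification you omit: it inserts position~$1$ into $\Ss$, so every query's first $\jump$ lookup is in row $\jump[1,\cdot]$ and only one subsequent bucket-and-sweep pass is required (at most two $\jump$ lookups occur per query, not three, since the final loop iteration always exits on a mismatch). Your three-pass organisation is slightly redundant but still within the stated bounds.
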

\begin{proof}
We add the index $1$ to the set of selected positions $\Ss$ and preprocess $S$ in $\cO(n)$ time and space to support $\LCE$ queries on $S_\#$ in $\cO(1)$ time~\cite{10.1007/11780441_5}. 

Notice that, using the dynamic programming algorithm of \cref{lemma:compute-table}, for any $r< \lambda$, the row $(\jump[i_r, j], j=1,\ldots,n)$ of the \jump table can be computed in $\cO(n\log n)$ time and $\cO(n)$ space from the next row $(\jump[i_{r+1}, j], j=1,\ldots,n)$. It suffices to compute the array~$A_r$ of occurrences of $P_r$ in time $\cO(n\log n)$ using the algorithm of Clifford and Clifford~\cite{clifford2007simple}, and then apply the recurrence relation of \cref{eq:jump}.
This way, we can compute the first row of the \jump table in $\cO(n (G/t) \cdot \log n)$ time using $\cO(n)$ space.

To answer an $\LCEW(1, j)$ query, we perform the following steps: first, we use $\jump[1, j]$ to jump to the last selected position of the longest common extension on the ``side of $1$'' and then perform at most $t$ regular $\LCE$ queries as in \cref{alg:subroutine}.
If after this process we reach a mismatch or the end of $S$, we are done. Otherwise, we need to perform another $\jump$ query from indices $(\ell_j, i_{r_j})$, where $i_{r_j} = j+\ell_j-1$ is a selected position, and then perform at most $t$ more regular \LCE queries from the resulting positions. We store the indices $(\ell_j, i_{r_j})$ for every value~$j$, grouped by selected position $i_{r_j}$.

To answer all required queries of the form $\jump[\ell_j, i_{r_j}]$ queries in $\cO(n (G/t) \cdot \log n)$ time and $\cO(n)$ space, we again recompute the \jump table row by row, starting from row $\lambda$ and going up, storing only one row at a time.
After computing the row corresponding to a selected position~$i_r$, we answer all \jump queries with $i_{r_j} = i_r$ and then perform the remaining \LCE queries to answer the underlying $\LCEW(1, j)$ query.

The claimed bound follows by setting $t := \sqrt G \leq G$.
\end{proof}

\section{Connection to Boolean Matrix Multiplication}
In this section, we describe a fine-grained connection between \LCEW
and (sparse) Boolean matrix multiplication.
In \cref{sec:lower-bound}, we use this connection to obtain a lower bound
on the preprocessing-query-time product of combinatorial data structures for \LCEW.
In \cref{sec:sparse}, we further connect sparse matrices and strings with few groups
of wildcards, deriving an efficient multiplication algorithm. 
Both results are based on the following lemma.

\begin{lemma}\label{lem:reduce_bmm_to_lce}
    Let $A$ and $B$ be Boolean matrices of respective dimensions $a \times b$ and $b \times c$. In $\cO(ab + bc)$ time, one can compute a string $S$ of length $ab + bc$ over alphabet $\{\wild, 0, 1\}$ such that each cell of $AB$ can be computed with an \LCEW query in $S$.
    The query indices can be obtained from the row and column indices of the cell in $\cO(1)$ time.
    The number of groups of wildcards in $S$ is at most $\min(m_A, m_B) + 1$, where $m_A$ and $m_B$ are the respective number of $1$-bits in $A$ and~$B$.
\end{lemma}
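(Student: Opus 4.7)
The plan is to encode the rows of $A$ and the columns of $B$ as length-$b$ strings over $\{\wild, 0, 1\}$ whose characters are compatible under $\matches$ precisely when the corresponding product $A[i][k] \cdot B[k][j]$ equals zero. Assume without loss of generality that $m_A \leq m_B$ (the opposite case is handled symmetrically, with the roles of the two matrices swapped). I would set $R_i[k] := \wild$ when $A[i][k] = 0$ and $R_i[k] := 1$ otherwise, and $C_j[k] := 1$ when $B[k][j] = 0$ and $C_j[k] := 0$ otherwise. A four-case check over $(A[i][k], B[k][j]) \in \{0,1\}^2$ confirms $R_i[k] \matches C_j[k] \Leftrightarrow A[i][k] \cdot B[k][j] = 0$; consequently $R_i \matches C_j \Leftrightarrow (AB)_{ij} = 0$.

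Next I would concatenate these strings into $S := R_1 R_2 \cdots R_a C_1 C_2 \cdots C_c$, which has length $ab + bc$ and can be built in $\cO(ab+bc)$ time by a single pass over $A$ and $B$. Setting $p_i := (i-1)b + 1$ and $q_j := ab + (j-1)b + 1$, both expressible from $i$, $j$, $a$, $b$ in $\cO(1)$ time, the fragments $S[p_i \dd p_i + b)$ and $S[q_j \dd q_j + b)$ lie entirely inside $S$, and the character-wise compatibility established above yields $(AB)_{ij} = 1 \Leftrightarrow \LCEW_S(p_i, q_j) < b$, which is the claimed reduction.

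Finally, I would count the wildcard groups in $S$. All wildcards sit inside the prefix $R_1 \cdots R_a$, since every $C_j$ uses only the symbols $0$ and $1$. That prefix contains exactly $m_A$ non-wildcard positions (one per $1$-entry of $A$), and in any string $m$ non-wildcard characters can separate the wildcards into at most $m+1$ maximal groups; hence the total is at most $m_A + 1$. The symmetric encoding in the opposite case gives the dual bound $m_B + 1$, so we always obtain $\min(m_A, m_B) + 1$ by choosing the better of the two. The main subtlety---and the only step that is not entirely routine---is precisely this asymmetric choice of encoding: the naive symmetric scheme (wildcards for the $0$-entries of \emph{both} matrices) only yields the weaker bound $m_A + m_B + 2$, so placing wildcards on one side and plain $\{0,1\}$ characters on the other is the key idea that brings the count down to $\min(m_A, m_B) + 1$.
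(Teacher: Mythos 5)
Your proposal is correct and follows essentially the same route as the paper: the same asymmetric encoding (wildcards for the $0$-entries of the sparser matrix, plain $\{0,1\}$ on the other side), the same row-major/column-major concatenation, the same query-index formulas, and the same $\min(m_A,m_B)+1$ counting argument. No gaps.
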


\begin{proof}
    For now, assume $m_A \leq m_B$. We encode $A$ into a string $S_A$ of length $ab$ in row-major order, i.e., $S_A[b \cdot i + j + 1] = \phi_A(A[i+1, j+1])$ for $i \in [0\dd a),j \in[0\dd b)$,
  and $B$ into a string $S_B$ of length $bc$ in column-major order,
  i.e., $S_B[i + b \cdot j + 1] = \phi_B(B[i+1, j+1])$ for $i \in [0\dd b),j \in[0\dd c)$,
  where
  \[
    \phi_A(x) =
      \begin{cases}
        1& \text{ if } x = 1,\\
        \wild& \text{ if } x = 0,
      \end{cases}
    \text{\qquad and\qquad}
    \phi_B(y) =
      \begin{cases}
        0& \text{ if } y = 1,\\
        1& \text{ if } y = 0.  
      \end{cases}
  \]
  It follows from this definition that $\phi_A(x)$ does not match $\phi_B(y)$, i.e., $\phi_A(x) \mismatches \phi_B(y)$, 
  if and only if $x = y = 1$. Since $S_A$ contains only $m_A$ non-wildcard symbols, it is easy to see that there are at most $m_A + 1$ groups of wildcards. We claim that each entry of $C = A\cdot B$ can be computed with one \LCEW query to $S$.
  
  \begin{claim}\label{claim:inner-product} For $i \in [0\dd b)$ and $j \in[0\dd c)$, it holds  %
  \[C[i+1,j+1] = 1 \Longleftrightarrow \LCEW_{S_A, S_B}(b\cdot i + 1, b\cdot j + 1) < b.\] \end{claim}
  \begin{claimproof} By the definition of $\LCEW$, it holds $\LCEW_{S_A, S_B}(b\cdot i + 1, b\cdot j + 1) < b$ if and only if there exists an index $k$ with $0 \leq k < b$ such that $S_A[b\cdot i + 1 + k] \mismatches S_B[b\cdot j + 1 + k]$.
  By construction, it holds $S_A[b\cdot i + k + 1] \mismatches S_B[k + b\cdot j + 1]$ if and only if $A[i+1,k+1] = B[k+1,j+1] = 1$. Finally, $C[i+1,j+1] = 1$ if and only if there exists an index $k$ such that $0 \leq k < b$ and $A[i+1,k+1] = B[k+1,j+1] = 1$, which concludes the proof of the claim.
  \end{claimproof}

  We assumed $m_A \leq m_B$. If $m_B < m_A$, then we simply swap the roles of $\phi_A$ and $\phi_B$. Evaluating $m_B < m_A$ and constructing the string clearly takes $\cO(ab + bc)$ time.
\end{proof}

\subsection{A Lower Bound for Combinatorial Data Structures}\label{sec:lower-bound}
Our lower bound is based on the combinatorial matrix multiplication conjecture which states that for any $\varepsilon > 0$ there is no combinatorial algorithm for multiplying two $n \times n$ Boolean matrices working in time $\cO(n^{3-\varepsilon})$. Gawrychowski and Uzna\'{n}ski~\cite[Conjecture 3.1]{DBLP:conf/icalp/GawrychowskiU18} showed that the following formulation is equivalent to this conjecture:

\begin{conjecture}[Combinatorial matrix multiplication conjecture]\label{conj:matrix-mul}
  For every $\eps > 0$ and every $\alpha, \beta, \gamma > 0$,
  there is no combinatorial algorithm that computes the product
  of Boolean matrices of dimensions $n^\alpha \times n^{\beta}$ and $n^{\beta}\times n^{\gamma}$
  in time $\cO(n^{\alpha+\beta+\gamma-\eps})$. 
\end{conjecture}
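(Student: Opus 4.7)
The final statement is a conjecture, not a theorem, so there is nothing to ``prove'' in the usual sense; the paper cites Gawrychowski and Uznański for the assertion that this rectangular formulation is equivalent to the classical combinatorial BMM hypothesis (no combinatorial $O(n^{3-\varepsilon})$ algorithm for multiplying two $n\times n$ Boolean matrices). My proof proposal is therefore a sketch of that equivalence, which is what the citation attributes.

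One direction is trivial: the classical conjecture is literally the case $\alpha=\beta=\gamma=1$ of the rectangular statement, so any combinatorial algorithm that refutes the rectangular formulation immediately refutes the square one. The non-trivial direction is the converse: deriving a fast combinatorial algorithm for $n\times n$ BMM from a hypothetical fast combinatorial algorithm for the rectangular problem. The plan is a block-decomposition argument. Suppose, for contradiction, that some combinatorial algorithm multiplies $n^\alpha \times n^\beta$ and $n^\beta \times n^\gamma$ Boolean matrices in time $O(n^{\alpha+\beta+\gamma-\varepsilon})$ for fixed $\alpha,\beta,\gamma,\varepsilon>0$. Given an $N\times N$ BMM instance $XY$, I would choose a parameter $n$ polynomial in $N$, partition $X$ into blocks of shape $n^\alpha\times n^\beta$ and $Y$ into blocks of shape $n^\beta\times n^\gamma$, compute the $N^3 / n^{\alpha+\beta+\gamma}$ block products using the hypothesised algorithm, and sum them. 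The total running time is $O\!\bigl(N^3 \cdot n^{-\varepsilon}\bigr)$, and picking $n$ to be a small but fixed polynomial in $N$ (e.g.\ $n = N^{1/\max(\alpha,\beta,\gamma)}$ so that no block exceeds $N$ in any dimension) yields an $O(N^{3-\varepsilon'})$ combinatorial algorithm for some $\varepsilon' > 0$, contradicting the classical conjecture.

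The main technical obstacle is compatibility of the block decomposition with $N$: the values $n^\alpha$, $n^\beta$, $n^\gamma$ are in general not integers, and even their integer roundings need not divide $N$. I would handle this by approximating $\alpha,\beta,\gamma$ by rationals at a negligible change to the exponent, rounding block sizes via $\lceil\cdot\rceil$, and zero-padding or absorbing the boundary strip into a lower-order term. A secondary subtlety is checking that the total bookkeeping (partitioning, summation of blocks) remains combinatorial and fits within the claimed time budget; this is routine but must be verified to ensure the reduction stays inside the combinatorial model. These technicalities are where the formal argument of Gawrychowski and Uznański spends most of its effort, and they are the step I would expect to be the most delicate to write down carefully.
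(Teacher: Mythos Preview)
You correctly identify that this is a conjecture, not a theorem, and that the paper offers no proof; it simply states the rectangular formulation and cites Gawrychowski and Uzna\'{n}ski for its equivalence with the classical square-matrix hypothesis. Your additional sketch of the block-decomposition equivalence is a reasonable outline of the cited argument, but it goes beyond what the paper itself does --- the paper makes no attempt to prove or even sketch the equivalence, treating it purely as a black-box citation.
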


\begingroup
\def\nu{g}
\def\mu{q}
\begin{theorem}\label{thm:lower-bound}
    Consider constants $\eps, \mu, \nu \in \mathbb R^+$ with $0 < \mu < \nu < 1$. 
    Consider a combinatorial data structure for \LCEW on a string of length $\Theta(n)$ that contains $\cO(n^\nu)$ wildcards with query time $\cO(n^\mu)$.
    Under \cref{conj:matrix-mul}, the preprocessing time cannot be $\cO(n^{1 + \nu - \mu - \eps})$.
\end{theorem}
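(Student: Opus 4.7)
The plan is to reduce combinatorial rectangular Boolean matrix multiplication to \LCEW via \cref{lem:reduce_bmm_to_lce} and then invoke \cref{conj:matrix-mul}. Assume, for contradiction, that a combinatorial data structure with the stated bounds exists. By monotonicity in $\eps$ (a data structure with preprocessing $\cO(n^{1+\nu-\mu-\eps})$ trivially also has preprocessing $\cO(n^{1+\nu-\mu-\eps'})$ for any $\eps' \le \eps$), we may further assume $\eps < \nu - \mu$, so that the exponent $1+\nu-\mu-\eps$ exceeds $1$. Fix arbitrary Boolean matrices $A \in \{0,1\}^{n^\alpha \times n^\beta}$ and $B \in \{0,1\}^{n^\beta \times n^\gamma}$, where $\alpha, \beta, \gamma > 0$ are chosen below. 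A direct application of \cref{lem:reduce_bmm_to_lce} to $(A,B)$ would produce up to $n^{\alpha+\beta}$ wildcard groups, far exceeding the budget $\cO(n^\nu)$. The key idea is therefore to partition $A$ horizontally into $K = n^{\alpha-\delta}$ strips $A_1, \ldots, A_K$ of $n^\delta$ rows each, apply \cref{lem:reduce_bmm_to_lce} to each pair $(A_i, B)$ separately, and read off every entry of the block product $A_i B$ via one \LCEW query on the resulting string $S_i$.

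The values of $\alpha, \beta, \gamma, \delta$ must jointly ensure that (i) $|S_i| = n^{\delta+\beta}+n^{\beta+\gamma} = \Theta(n)$, (ii) $S_i$ has at most $n^\nu$ wildcard groups, as \cref{lem:reduce_bmm_to_lce} yields $\min(m_{A_i}, m_B)+1 \le n^{\delta+\beta}+1$, (iii) the total query cost $K \cdot n^{\delta+\gamma} \cdot n^\mu = n^{\alpha+\gamma+\mu}$ stays polynomially below $n^{\alpha+\beta+\gamma}$, and (iv) the total preprocessing cost $K \cdot n^{1+\nu-\mu-\eps}$ does the same. These translate into $\beta+\gamma = 1$, $\delta+\beta \le \nu$, $\beta > \mu$, and $\delta > \nu-\mu-\eps$, respectively, and the system has a non-empty solution precisely because $\eps > 0$. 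A convenient choice is
\[
  \beta = \mu + \tfrac{\eps}{2}, \qquad \alpha = \gamma = 1 - \beta, \qquad \delta = \nu - \mu - \tfrac{3\eps}{4},
\]
all strictly positive under $\eps < \nu - \mu$.

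A short calculation shows that the total runtime $\cO(n^{\alpha-\delta+1+\nu-\mu-\eps} + n^{\alpha+\gamma+\mu})$ simplifies to $\cO(n^{2-\mu-3\eps/4})$, while $\alpha+\beta+\gamma = 2-\mu-\eps/2$; hence the BMM instance is solved in time $\cO(n^{\alpha+\beta+\gamma-\eps/4})$, contradicting \cref{conj:matrix-mul}. The main obstacle is the tight parameter balancing just described: a smaller $\delta$ renders the amortized preprocessing too expensive, a larger $\delta$ inflates the wildcard count past $n^\nu$, and the feasibility window for $\beta$, namely the interval $(\mu, \mu+\eps)$, shrinks to a point as $\eps \to 0$, mirroring the sharpness of the bound.
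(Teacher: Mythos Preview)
Your argument is correct, but it introduces an unnecessary layer. The partitioning of $A$ into $K=n^{\alpha-\delta}$ horizontal strips is not needed: since \cref{conj:matrix-mul} already covers arbitrary rectangular dimensions, you can simply take the ``strip'' itself as the full matrix $A$. Concretely, the paper's proof sets $\alpha=\nu-\mu-\eps/2$, $\beta=\mu+\eps/2$, $\gamma=1-\mu-\eps/2$ and applies \cref{lem:reduce_bmm_to_lce} once to the whole product. Then $ab=n^{\alpha+\beta}=n^{\nu}$ bounds the number of wildcards directly, $bc=n^{\beta+\gamma}=n$ gives $|S|=\Theta(n)$, and the $ac=n^{1+\nu-2\mu-\eps}$ output entries are each computed with one $\cO(n^\mu)$-time query, matching the assumed $\cO(n^{1+\nu-\mu-\eps})$ preprocessing. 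Your parameters $(\delta,\beta,\gamma)$ are essentially the paper's $(\alpha,\beta,\gamma)$ up to the $\eps/4$ slack, and the outer loop over $K$ strips just replays the same contradiction $K$ times. So the sentence ``A direct application of \cref{lem:reduce_bmm_to_lce} to $(A,B)$ would produce up to $n^{\alpha+\beta}$ wildcard groups, far exceeding the budget'' is misleading: it is only true for your particular choice $\alpha=1-\beta$, not intrinsically, and a smarter choice of $\alpha$ makes the whole strip machinery superfluous.
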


\begin{proof}
  Since the claim is stronger for smaller $\eps$, we can assume $\eps/2 < \nu - \mu$ without loss of generality.
  We define positive constants $\alpha = \nu - \mu - \eps/2$, $\beta = \mu + \eps/2$, and $\gamma = 1 - \mu - \eps/2$.
  Let $A$ and $B$ be Boolean matrices of respective dimensions $a\times b$ and $b \times c$, where $a = \ceil{n^{\alpha}}$, $b = \ceil{n^\beta}$, and $c = \ceil{n^\gamma}$.
  Our goal is to compute $AB$ of dimensions $a \times c$. 

  We use \cref{lem:reduce_bmm_to_lce} to construct a string $S$ of length $ab + bc = \Theta(n)$ that contains at most $ab = \cO(n^\nu)$ wildcards. Now each of the $ac = \Theta(n^{1 + \nu - 2\mu - \eps})$ cells of $AB$ can be computed with one \LCEW query. Hence, if there is an \LCEW data structure for $S$ that has query time $\cO(n^\mu)$ and preprocessing time $\cO(n^{1 + \nu - \mu - \eps})$, then the overall time for computing $AB$ is $\cO(n^{1 + \nu - \mu - \eps}) = \cO(n^{\alpha + \beta + \gamma - \eps/2})$, which contradicts \cref{conj:matrix-mul}.
\end{proof}
\endgroup

\subsection{Fast Sparse Matrix Multiplication}\label{sec:sparse}
In this section, we further connect sparse matrices and strings with few groups
of wildcards, deriving an algorithm for sparse Boolean matrix multiplication (BMM). 
\cref{lem:reduce_bmm_to_lce} already relates wildcards and the number of $1$-bits in the input matrices.
Now we also consider the number $m_{out}$ of $1$-bits in the output matrix.
While the reduction underlying the proof of \cref{thm:lower-bound} uses one
\LCEW query for each entry of the output matrix, we next show that, using the same reduction, we can actually compute multiple $0$-bits of the output matrix with a single \LCEW query. This will ultimately allow us to use less than $a+c+m_{out}$ queries.

\begin{lemma}\label{lem:reduce_bmm_to_lce:diagonals}
    Let $A, B$ be Boolean matrices of respective dimensions $a \times b$ and $b \times c$.
    Let $i \in [0\dd a)$, $j \in [0\dd c)$, and $x \in [1..\min(a - i, c - j)]$. For the reduction from \cref{lem:reduce_bmm_to_lce}, it holds $\LCEW_{S_A, S_B}(bi + 1, bj + 1) \geq bx$ if and only if $\forall y \in [1\dd x] : (AB)[i + y, j + y] = 0$.
\end{lemma}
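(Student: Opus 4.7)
The plan is to observe that shifting both query indices by exactly $b$ in the reduction corresponds to moving one step along a diagonal of the product matrix, and then to decompose the length-$bx$ prefix into $x$ consecutive blocks of length $b$, each of which is exactly an instance of the previous claim.

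First, I would unpack the construction. Because $S_A$ is written in row-major order, $S_A[b(i+y-1) + 1 \dd b(i+y)]$ is precisely the encoding of row $i+y$ of $A$ for any $y$ with $i+y \le a$. Because $S_B$ is written in column-major order, $S_B[b(j+y-1) + 1 \dd b(j+y)]$ is precisely the encoding of column $j+y$ of $B$ for any $y$ with $j+y \le c$. Hence, if we start the simultaneous scan at $(bi+1,\,bj+1)$ and have already matched $b(y-1)$ characters, the next $b$ positions in $S_A$ and $S_B$ encode row $i+y$ of $A$ and column $j+y$ of $B$, respectively. The constraint $x \le \min(a-i,\,c-j)$ ensures that all $x$ of these blocks lie inside $S_A$ and $S_B$.

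Next, for each fixed $y \in [1\dd x]$, I would invoke the previous claim on the shifted query $(b(i+y-1)+1,\,b(j+y-1)+1)$. That claim states exactly that the corresponding $b$-character window matches (i.e.\ the local LCEW is at least $b$) if and only if $(AB)[i+y, j+y] = 0$. Since matching of $S_A$ against $S_B$ is coordinate-wise, the length-$bx$ prefix matches iff each of the $x$ disjoint length-$b$ blocks matches. Combining these two steps yields $\LCEW_{S_A,S_B}(bi+1, bj+1) \ge bx$ iff $(AB)[i+y, j+y] = 0$ for every $y \in [1\dd x]$, as required.

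I do not expect a genuine obstacle here; the content is already implicit in \cref{claim:inner-product}, and the work is largely careful index bookkeeping (off-by-one checks on the $0$-based vs.\ $1$-based boundaries in the definitions of $\phi_A,\phi_B$) plus the observation that block $y$ of the scan aligns with the $(i+y,j+y)$-th diagonal entry. The one subtlety worth stating explicitly is that we never need to look at any character beyond position $bx$, so the claim is genuinely about an ``$\ge$'' event rather than an exact value of the LCEW.
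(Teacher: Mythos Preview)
Your proposal is correct and follows essentially the same approach as the paper: both decompose the length-$bx$ prefix into $x$ consecutive length-$b$ blocks aligned with diagonal entries and apply \cref{claim:inner-product} to each block. The paper phrases the two directions via LCEW shift identities (e.g., $\LCEW(i',j') \ge b$ and $\LCEW(i'+b,j'+b) \ge \ell$ imply $\LCEW(i',j') \ge b+\ell$), while you use the equivalent and slightly more direct ``prefix matches iff every block matches'' formulation; the content is the same.
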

\begin{proof}
  First, if $\LCEW(bi + 1, bj + 1) \ge  bx$,
  then for every $y \in [1\dd x]$ it holds \[\LCEW(b(i+y-1) + 1, b(j+y-1) + 1) = \LCEW(bi + 1, bj + 1) - by + b \ge bx - by + b \ge b.\]
  By \cref{claim:inner-product}, this implies that $(AB)[i+y,j+y] = 0$.
  For the converse, we exploit that $\LCEW(i', j') \geq b$ and $\LCEW(i' + b, j' + b) \geq \ell$
  implies $\LCEW(i',j') \ge b + \ell$ for any query positions $i',j'$ and integer $\ell$. If $(AB)[i + y, j + y] = 0$ for $y \in [1\dd x]$, then by \cref{claim:inner-product} also $\LCEW(b(i+y-1) + 1, b(j+y-1) + 1) \geq b$, and it readily follows $\LCEW(bi + 1, bj + 1) \geq bx$.
\end{proof}

The above lemma implies that the answer to an \LCEW query at the first indices of a diagonal gives us
the length of the longest prefix run of zeroes in this diagonal.
A repeated application of this argument implies that
computing the entries in the $d$-th diagonal of $C$ takes $m_{d} +1$ \LCEW queries,
where $m_{d}$ is the number of non-zero entries in this diagonal.
Summing over all $a+c-1$ diagonals, this gives a total of $a + c+m_{out} - 1$ queries. 
Now we use this insight to derive an algorithm for the multiplication of sparse matrices.

\begin{theorem}\label{thm:sparseMM_rec}
  Let $A, B$ be Boolean matrices of respective dimensions $a \times b$ and $b \times c$, and let $C = AB$. Let $m_A$, $m_B$, and $m_C$ denote the respective number of $1$-bits in $A$, $B$, and $C$.
  There is a deterministic combinatorial algorithm for computing $C$ that runs in time
  $\cO((m_A + m_B) \cdot \log (m_A + m_B) + \sqrt{(ab + bc) \cdot \min(m_{A}, m_B) \cdot(a + c + m_{C}) \cdot \log (ab + bc)})$.
\end{theorem}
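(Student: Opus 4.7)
The plan is to translate the computation of $C=AB$ into a sequence of \LCEW queries on the string given by \cref{lem:reduce_bmm_to_lce} and then to evaluate these queries using the time--space trade-off of \cref{thm:data-structure}. I would first apply \cref{lem:reduce_bmm_to_lce} to $A,B$ to obtain a string $S$ of length $N=ab+bc$ over $\{\wild,0,1\}$ with at most $G=\min(m_A,m_B)+1$ groups of wildcards. By \cref{lem:reduce_bmm_to_lce:diagonals}, a single \LCEW query at the starting indices of a diagonal of $C$ returns the length of its maximal prefix of zeros. This is the key combinatorial ingredient: to compute $C$, I would process each of the $a+c-1$ diagonals independently, and for each diagonal use one \LCEW query to jump to its next $1$-entry (if any), record that entry, and restart from the position immediately after. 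Every entry reported as $1$ costs one extra query, so the total number of queries is $Q=\cO(a+c+m_C)$.

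\textbf{Balancing the trade-off.} Running the data structure of \cref{thm:data-structure} on $S$ with parameter $t\in[1\dd G]$ gives preprocessing time $\cO(N(G/t)\log N)$ and $\cO(t)$ per query, for a total cost of $\cO(N(G/t)\log N+Qt)$. I would set $t=\Ceil{\sqrt{NG\log N/Q}}$ so that the two terms are balanced, yielding total time $\cO(\sqrt{NGQ\log N})=\cO(\sqrt{(ab+bc)\cdot\min(m_A,m_B)\cdot(a+c+m_C)\cdot\log(ab+bc)})$, which is precisely the $\sqrt{\;}$ summand in the theorem.

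\textbf{Input processing and main obstacle.} The additive $\cO((m_A+m_B)\log(m_A+m_B))$ term will absorb the cost of reading the sparse input, sorting its non-zero entries into the row- and column-major orders required to emit the encoding of \cref{lem:reduce_bmm_to_lce}, and identifying empty rows of $A$ and empty columns of $B$ so that their (entirely zero) contributions to $C$ can be skipped without spending a query per diagonal. The main obstacle I expect is controlling the choice of $t$ at the boundaries of $[1\dd G]$: when the balance point $\sqrt{NG\log N/Q}$ falls below $1$ or above $G$ I would clamp $t$ accordingly and argue that the resulting running time is still dominated by the two terms in the statement, possibly by first shrinking the instance to the non-trivial dimensions $a'\le m_A$, $b'\le\min(m_A,m_B)$, $c'\le m_B$ induced by the empty-row/column reduction so that $N$ drops to $\cO((m_A+m_B)\min(m_A,m_B))$ in the extreme regimes. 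Verifying that the diagonal scan reports every $1$-entry of $C$ exactly once and that the data structure of \cref{thm:data-structure} can be instantiated on the ternary alphabet $\{\wild,0,1\}$ without modification are the remaining routine checks.
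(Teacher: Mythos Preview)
Your plan is essentially the paper's proof: reduce via \cref{lem:reduce_bmm_to_lce}, walk each diagonal with \cref{lem:reduce_bmm_to_lce:diagonals} for a total of $a+c+m_C-1$ queries, and balance~$t$ in \cref{thm:data-structure}. Two points deserve attention.

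The real gap is that you set $t=\Ceil{\sqrt{NG\log N/Q}}$ with $Q=\Theta(a+c+m_C)$, but $m_C$ is the number of $1$-entries in the \emph{output}: you cannot know it before you have computed~$C$. The paper handles this with an exponential search on~$m_C$: start with the estimate $a+c$, build the data structure and run the diagonal scan, and whenever the accumulated query time exceeds the construction time, double the estimate and restart. Since the time budget grows by a factor~$\sqrt{2}$ per round, the total cost is dominated by the final round, in which the estimate is a constant-factor approximation of the true~$m_C$. Your proposal never addresses how~$t$ is chosen without this information.

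For the boundary $t>G$, the paper does not clamp and shrink; it switches to \cref{cor:extended-tradeoff}, whose construction time is $\cO(N)$ (no~$\log$) for any $t\in[G\dd G\cdot N^{1-\varepsilon}]$. To make this $\cO(N)$ term absorbable by the query cost, it first reduces the \emph{inner} dimension---deleting column~$k$ of~$A$ together with row~$k$ of~$B$ whenever one of them is all-zero---so that $b\le m_{in}$ and hence $GQ\ge b(a+c)=N$. Your clamp-and-shrink plan may be salvageable, but with $t=G$ in \cref{thm:data-structure} the construction still costs $\cO(N'\log N')$, and you would owe an argument that this stays within the stated bound; the paper's route via \cref{cor:extended-tradeoff} sidesteps that analysis.
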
\begin{proof}

\def\emin{m_{\textnormal{in}}}
\def\emout{m_{\textnormal{out}}}

  We assume that the matrices are given as a list of coordinates of $1$-bits, sorted by row index and then by column index.
  This compact representation has size $\cO(m_{A} + m_{B})$.

  Let $m_{in} = \min(m_A, m_B)$. We assume without loss of generality that $m_{in} \geq b$. Otherwise, there must be empty (i.e., all-zero) columns in $A$ or empty rows in $B$.
  If either the $i$-th column of $A$ or the $i$-th row of $B$ is empty (or both), then we can simply remove both the $i$-th row of $A$ and the $i$-th column of $B$ without affecting the result of the multiplication.
  Eliminating empty rows and columns in this way can be achieved by appropriately offsetting the indices of non-zeroes in the list representation, which can be easily done in $\cO((m_A + m_B) \cdot \log (m_A + m_B))$ time. Afterwards, there are respectively at most $m_{in}$ columns in $A$ and rows in $B$.

  Consider the string $S$ from the reduction in \cref{lem:reduce_bmm_to_lce}, which has length $ab + bc$ and
  contains $G \leq m_{in} + 1$ groups of wildcards, and let
  \[t = \Ceil{\sqrt{(ab + bc) \cdot G \cdot \log(ab + bc) / (a + c + m_C)}}.\]
  If $t \le G$, then we build the data structure from \cref{thm:data-structure} for $S$ with this parameter~$t$.
  Otherwise, that is, if $t > G$, we use the data structure from \cref{cor:extended-tradeoff} instead, which is possible because $t < 1 + \sqrt{(ab + bc) \cdot G \cdot \log (ab + bc)} = 1 + \sqrt{\absolute{S} \cdot G \cdot \log \absolute{S}} \leq G \cdot |S|^{3/4}$ for $|S|, G > 10$.
  Then, using \cref{lem:reduce_bmm_to_lce:diagonals} and the argument described above this \lcnamecref{thm:sparseMM_rec}, we compute $C$ using less than $a+c+m_C$ queries. The total query time is $\cO({\sqrt{(ab + bc) \cdot m_{in} \cdot \log(ab + bc) \cdot (a + c + m_C)}})$.
  (This holds when $t = 1$ since $m_C \leq m_A \cdot m_B = \max(m_A, m_B) \cdot m_{in} \leq (ab + bc) \cdot m_{in}$.)
  In the case when we use the data structure encapsulated in \cref{thm:data-structure}, the preprocessing time matches the query time.
  In the remaining case when we use the data structure from \cref{cor:extended-tradeoff}, we need $\cO(ab + bc)$ preprocessing time. Due to $m_{in} \geq b$, the preprocessing time is still dominated by the time required for all queries, and hence the total time complexity is as claimed.

  Choosing $t$ requires knowing the value of $m_C$. If $m_C$ is unknown, we estimate it using exponential search, starting with estimate $a + c$ and doubling the estimate in every step. For a given estimate of $m_C$, we run the algorithm, halting and restarting whenever the total query time exceeds the construction time.
  Since the time budget increases by a factor of $\sqrt{2}$ in each round, the total time is dominated by the final round, in which we run the algorithm with a constant factor approximation of $m_C$. Hence the exponential search does not asymptotically increase the time complexity.
\end{proof}

The following corollary for square matrices is immediate.

\begin{corollary}\label{thm:sparseMM}
  Let $A, B$ be Boolean matrices, each of size $n \times n$, such that the total number of $1$-bits in $A$ and $B$ is $m_{in}$, and the number of $1$-bits in $AB$ is $m_{out}$.
  There is a deterministic combinatorial algorithm that computes $AB$ in
  $\cO(n\sqrt{m_{in} \cdot(n+ m_{out}) \log n})$ time.
\end{corollary}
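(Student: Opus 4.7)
The plan is to apply \cref{thm:sparseMM_rec} directly with $a = b = c = n$ and simplify the resulting expression. Under this substitution we have $ab + bc = 2n^2$, so $\log(ab+bc) = \cO(\log n)$, and $a + c + m_C = 2n + m_{out}$. Moreover, since $m_A + m_B = m_{in}$ (noting that $A$ and $B$ share no entries in this counting, or more precisely their $1$-bit counts sum to $m_{in}$ by definition), we have $\min(m_A, m_B) \le m_{in}/2$. Plugging these identities into the bound of \cref{thm:sparseMM_rec} yields a total running time of
\[
\cO\bigl(m_{in} \log m_{in} + \sqrt{2n^2 \cdot (m_{in}/2) \cdot (2n + m_{out}) \cdot \log(2n^2)}\bigr) = \cO\bigl(m_{in} \log m_{in} + n \sqrt{m_{in}(n + m_{out}) \log n}\bigr).
\]

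The only nontrivial part of the proof is to check that the additive $m_{in} \log m_{in}$ term coming from the list-preprocessing step is dominated by the square-root term, so that it can be dropped from the final bound. For this, I will use that each input matrix has at most $n^2$ entries, hence $m_{in} \le 2n^2$, which gives $m_{in} (\log m_{in})^2 = \cO(n^2 \log^2 n) = \cO(n^3 \log n)$. Rearranging this inequality yields $m_{in} \log m_{in} = \cO\bigl(n \sqrt{m_{in} \cdot n \log n}\bigr)$, and since $n \le n + m_{out}$, this is in turn $\cO\bigl(n \sqrt{m_{in}(n + m_{out}) \log n}\bigr)$. Thus the first term is indeed absorbed, establishing the claimed $\cO(n \sqrt{m_{in}(n + m_{out}) \log n})$ running time.

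There is no real obstacle here: the corollary is a straightforward specialization of \cref{thm:sparseMM_rec} to the square case, and the only minor bookkeeping is the dominance check above, which relies solely on the crude bound $m_{in} \le 2n^2$. Determinism and the combinatorial nature of the algorithm are inherited directly from \cref{thm:sparseMM_rec}.
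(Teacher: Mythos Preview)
Your proposal is correct and matches the paper's approach: the paper simply states that the corollary is immediate from \cref{thm:sparseMM_rec}, and your substitution $a=b=c=n$ together with $\min(m_A,m_B)\le m_{in}/2$ is exactly the intended specialization. Your additional check that the $m_{in}\log m_{in}$ term is absorbed by the square-root term is sound (via $m_{in}\le 2n^2$) and in fact more explicit than what the paper provides.
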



\section{Conditional Lower Bounds from \tsum and \sd}

In this section, we show conditional lower bounds for the space vs query time trade-off of the \LCEW problem and for the preprocessing time vs query time trade-off of the \LCEW problem.
The lower bounds are conditional on the hardness of the \tsum
and \sd problems, which are defined below.

\problemtask{\tsum}%
{A set $S \subseteq [-n^3 \dd n^3]$ of size $n$.}%
{Decide if there exist distinct $x,y,z \in S$ with $x+y=z$.}

\problemquery{\sd}%
{Sets $S_1, S_2, \ldots , S_r$ of total size $N$.}%
{Given $i,j \in [1 \dd r]$, decide if $S_i \cap S_j = \emptyset$.}

It is conjectured that \tsum cannot be solved in strongly sub-quadratic time. We will obtain conditional lower bounds on the preprocessing time of \LCEW data structures by exploiting that \tsum can be reduced to \sd.

\begin{conjecture}[\tsumconj\ {\cite{DBLP:journals/comgeo/GajentaanO12,DBLP:conf/stoc/Patrascu10}}]\label{conj:tsum}
\tsum requires $n^{2-o(1)}$ time.
\end{conjecture}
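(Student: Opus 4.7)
The \tsumconj is a longstanding hardness hypothesis rather than a provable theorem: unconditionally establishing an $n^{2-o(1)}$ lower bound for \tsum would shatter a fundamental barrier, since no technique is currently known to prove any superlinear time lower bound in the word-RAM model for an explicit problem in $\mathsf{P}$. An honest ``proof plan'' can therefore only consist of organising the supporting evidence and making the obstacles explicit.

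The plan is to assemble the evidence in three layers. First, I would recall the restricted-model lower bounds for \tsum: Erickson's $\Omega(n^2)$ bound in the $3$-linear decision tree model, together with extensions to bounded-degree algebraic decision trees. These results capture the quadratic behaviour in models strong enough to encode every known \tsum algorithm but weak enough to admit an information-theoretic counting argument that the sign-pattern of the $\binom{n}{3}$ relevant linear forms carries $\Omega(n^2)$ bits. Second, I would describe the Gr\o{}nlund--Pettie $O(n^2 / \polylog n)$ algorithm on the real-RAM, which is precisely why the conjecture is phrased as $n^{2-o(1)}$ rather than $\Omega(n^2)$: any proof must avoid uniform $\Omega(n^2)$ arguments and leave room for the subpolynomial slack. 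Third, I would survey the fine-grained web of reductions under which many computational-geometry, data-structure and string problems become equivalent to \tsum up to $n^{o(1)}$ factors, providing strong indirect evidence that a truly subquadratic algorithm would cause an implausible cascade of unexpected speedups elsewhere.

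The main obstacle, and the reason no real proof can be attempted here, is that we have no unconditional lower bound beating $\Omega(n \log n)$ on the word-RAM for any problem in $\mathsf{P}$. A proof of the \tsumconj would require either (i) a new lower-bound framework for the word-RAM that breaks the current ceiling on time lower bounds, (ii) an unconditional separation isolating arithmetic problems from general computation, or (iii) a reduction from a problem already known to require quadratic time in a strong model, which runs into the same barrier. In the context of this paper, the conjecture is used purely as a black-box hardness assumption (in parallel with \cref{conj:matrix-mul} and the forthcoming \sd assumption) to derive conditional lower bounds on \LCEW data structures, so the only defensible course of action is to state it without proof, exactly as the authors do.
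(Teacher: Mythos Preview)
Your assessment is correct: this is a \emph{conjecture}, not a theorem, and the paper offers no proof---it simply states the hypothesis and cites the standard references, then invokes it as a black-box hardness assumption via \cref{thm:3sum-sd}. Your discussion of supporting evidence and obstacles is accurate context, but there is nothing to compare against in the paper beyond the bare statement.
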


\begin{theorem}[{\cite[Corollary 1.8]{DBLP:journals/corr/KopelowitzPP14}, see also~\cite{DBLP:conf/soda/KopelowitzPP16}}\protect\footnote{For the sake of uniform notation, we use a slightly different but equivalent formulation of the \lcnamecref{thm:3sum-sd}.}]\label{thm:3sum-sd}
For every constant $\eps > 0$ and $0 \leq q \leq \frac12$, there is no data structure for \mbox{\sd} with $\cO(N^{2 - 2q - \eps})$ preprocessing time and $\cO(N^q)$ query time under the \tsumconj.
\end{theorem}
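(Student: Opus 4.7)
The plan is to observe that \cref{thm:3sum-sd} is a reformulation of Corollary~1.8 in~\cite{DBLP:journals/corr/KopelowitzPP14}, which states that, assuming the \tsumconj, no \sd data structure can achieve preprocessing time $t_p$ and query time $t_q$ with $t_p \cdot t_q^2 = \cO(N^{2 - \eps})$ for any constant $\eps > 0$. Substituting $t_q = N^q$ in this relation yields $t_p = \cO(N^{2 - 2q - \eps})$, which is precisely the form stated in \cref{thm:3sum-sd} (and the constraint $0 \le q \le \tfrac12$ is only what is needed in our downstream applications). Thus I would invoke~\cite{DBLP:journals/corr/KopelowitzPP14} directly and verify the restatement.

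For completeness, I would then sketch the underlying Kopelowitz--Porat--Porat reduction. Given a \tsum instance on a set $S \subseteq [-n^3 \dd n^3]$ of size $n$, the reduction produces an \sd instance of total input size $N = \tilde\Theta(n)$ together with $\tilde\Theta(n)$ disjointness queries whose answers collectively decide whether the \tsum instance has a solution. The first step hash-partitions $S$ using an almost-linear random hash $h$ into $\tilde\Theta(\sqrt{n})$ buckets $B_1,\ldots, B_g$ of size $\tilde\cO(\sqrt{n})$; near-linearity guarantees that any solution $(x, y, z)$ satisfies $h(z) \equiv h(x) + h(y) \pmod g$ up to $\cO(1)$ correction offsets, restricting the bucket triples that need examination. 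The second step sets up, for each $z \in S$ and each viable candidate pair $(i, j)$, a \sd query on $B_i$ against the translated set $z - B_j = \{z - y : y \in B_j\}$: a ``not disjoint'' answer certifies the existence of $(x, y) \in B_i \times B_j$ with $x + y = z$. Standard bookkeeping (sharing translates across elements of each bucket, compact encoding) keeps both the total stored set size and the number of queries at $\tilde\Theta(n)$.

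Given such a reduction, the contradiction argument is immediate: with a hypothetical \sd data structure of the claimed preprocessing and query complexity, the total time for solving \tsum would be
\[
\tilde \cO\bigl(N^{2-2q-\eps}\bigr) \;+\; \tilde \cO(n) \cdot \cO\bigl(N^q\bigr) \;=\; \tilde\cO\bigl(n^{2-2q-\eps} + n^{1+q}\bigr).
\]
For every $q \in [0, \tfrac12]$ and fixed $\eps > 0$, both summands are bounded by $\cO(n^{2-\delta})$ for some $\delta = \delta(\eps) > 0$, since $2 - 2q - \eps \le 2 - \eps$ and $1 + q \le \tfrac{3}{2}$. This contradicts the \tsumconj. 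The principal technical obstacle is concentrated entirely in the reduction step: designing an efficiently computable hash family on $[-n^3 \dd n^3]$ that simultaneously balances bucket sizes and preserves enough additive structure to keep the candidate triples (and therefore the query count and total stored size) at $\tilde\Theta(n)$; these combinatorial arguments are carried out in~\cite{DBLP:journals/corr/KopelowitzPP14,DBLP:conf/soda/KopelowitzPP16}, which is why our statement amounts to a direct restatement, as the footnote indicates.
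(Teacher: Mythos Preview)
Your proposal is correct and matches the paper's approach: the paper does not prove \cref{thm:3sum-sd} at all but simply cites it as \cite[Corollary~1.8]{DBLP:journals/corr/KopelowitzPP14}, with the footnote noting that the formulation has been rephrased for uniform notation. Your additional sketch of the Kopelowitz--Pettie--Porat reduction goes beyond what the paper provides, but is accurate and helpful context.
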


For conditional lower bounds on the space complexity of \LCEW data structures, we instead use the following conjecture on \sd.

\begin{conjecture}[\sdconjstrong\ {\cite{DBLP:conf/wads/GoldsteinKLP17}}\protect\footnote{For the sake of uniform notation, we use a slightly different but weaker formulation of the conjecture.}]\label{conj:sd_strong}
For every constant $\eps > 0$ and $0 \leq q \leq \frac12$, there is no data structure for \sd occupying $\cO(N^{2 - 2q - \eps})$ space and answering queries in $\cO(N^{q})$ time.
\end{conjecture}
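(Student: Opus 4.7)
The final statement is the Strong Set-Disjointness Conjecture, quoted from Goldstein, Kopelowitz, Lewenstein, and Porat~\cite{DBLP:conf/wads/GoldsteinKLP17} rather than established here. There is no unconditional proof on the table in current fine-grained complexity, so what I would actually write after this statement is a short justification with three ingredients: verify that our formulation is no stronger than the published one, rehearse the algorithmic evidence that makes the curve tight, and flag the way the conjecture is about to be used.

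For the first ingredient, I would spell out the correspondence between our $(\eps, q)$-parameterization and the original conjecture. The accompanying footnote already signals that we use a weaker formulation; the restriction to $0 \le q \le \tfrac12$ and to the specific space exponent $2 - 2q - \eps$ is strictly a weakening of the Goldstein et al.~statement, so any data structure refuting our version would refute theirs. For the second ingredient, I would exhibit simple data structures matching the curve: one can answer \sd queries in $\cO(N^q)$ time with $\cO(N^{2-2q})$ space by precomputing partial intersection information at a $q$-dependent granularity, which shows the conjectured exponent cannot be improved on the algorithmic side in any obvious way.

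The real use of the conjecture, and the place where genuine proof-writing remains, is the upcoming \cref{thm:3sum_and_sd_conditional_lower}, which will combine this space conjecture with \cref{thm:3sum-sd} to derive the $\cO(nG/t^4)$ space-and-preprocessing lower bound for \LCEW advertised in the introduction. The plan there is to reduce \sd on $r$ sets of total size $N$ to \LCEW on a string of length $\Theta(n)$ with $\Theta(G)$ groups of wildcards, so that $nG = \Theta(N^2)$ and each \LCEW query resolves the disjointness of a chosen pair. Matching the $t^4$ exponent on the \LCEW side against the $N^q$ query time on the \sd side forces $t^2 \asymp N^q$, and this quartic dependence, as opposed to the quadratic one that appears in \cref{thm:lower-bound}, is what I expect to be the main technical hurdle: it suggests the reduction must pack $\Theta(t^2)$ intersection bits into every \LCEW query, likely via a two-level encoding in the spirit of \cref{lem:reduce_bmm_to_lce} rather than the direct one used for matrix multiplication.
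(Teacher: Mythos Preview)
You correctly recognize that this is a conjecture quoted from the literature and not something the paper proves; the paper indeed offers no proof here, only the footnote about the weaker formulation, and then uses the conjecture as a black box in \cref{thm:3sum_and_sd_conditional_lower}. On that level your proposal is right.

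Where your forward-looking speculation goes astray is in the mechanism you anticipate for the $t^4$ loss. You predict that the reduction will need to ``pack $\Theta(t^2)$ intersection bits into every \LCEW query'' via a two-level encoding. The paper's reduction (\cref{lem:sdred:new}) does nothing of the sort: each \sd query is answered by a \emph{single} \LCEW query, using essentially the same one-set-per-block encoding as in \cref{lem:reduce_bmm_to_lce}. The quartic dependence instead emerges from a preprocessing step that shrinks the universe. Elements appearing in few sets are handled by a table of $\cO(z^s)$ precomputed intersecting triples; what remains is a collection of $r = \cO(N/z^q)$ sets over a universe of size $u = \cO(N^2/z^s)$, encoded into a string of length $z = \Theta(ru) = \Theta(N^{3/(s+q+1)})$. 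It is this nonlinear relation between $N$ and $z$, not any per-query packing, that turns the \sd exponent $2 - 2q'$ into the \LCEW exponent $1 + g - 4q$ after the change of variables in the proof of \cref{thm:3sum_and_sd_conditional_lower}. Your claim that $nG = \Theta(N^2)$ is also not what the reduction achieves; the string has length $\Theta(z)$ and $\cO(z^g)$ wildcards, with the balance against $N$ controlled by the choice of $s$.
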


\subsection{Reduction from \sd to \LCEW}

The main workhorse of this section is a reduction from \sd to \LCEW; our technique is similar to a technique in a work of Kopelowitz and Vassilevska Williams \cite[Theorem~1]{DBLP:conf/icalp/KopelowitzW20}.
Note that the restrictions on the parameters in the \lcnamecref{lem:sdred:new} below are relatively weak; using the notation from the upper bound, we will use the \lcnamecref{lem:sdred:new} to derive conditional lower bounds for a wide range of $t$ and $G$, more precisely, whenever $t \leq n^{\frac13-\eps}$ and $G > t^{2}\cdot n^\eps$.

\begin{lemma}\label{lem:sdred:new}
Consider constants $q,s,g$ such that 
$0 \leq q \leq g \leq 1$ and $\max(\frac{q + 1}2, 2 + 2q - 3g) \leq s \leq 2 - q$.
Suppose that,
for strings of length $\cO(n)$ that contain $\cO(n^g)$ wildcards, there exists an $\LCEW$ data structure with query time $\cOtilde(n^q)$ and space (resp.\ preprocessing time) $\cOtilde(n^{s + g - 1})$.
Then, for a \sd instance of size $N$, there is a data structure with query time $\cOtilde(z^q)$ and space (resp.\ preprocessing time) $\cOtilde(z^{s})$, where $z = \Theta(N^{3/(s+ q + 1)})$.
\end{lemma}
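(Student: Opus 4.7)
My plan is to reduce \sd to \LCEW by encoding each set as a characteristic vector with wildcards on absent elements (along the lines of Lemma~\ref{lem:reduce_bmm_to_lce}), batching many pairwise disjointness queries into a single \LCEW instance, and tuning the parameters so that each \sd query translates to exactly one \LCEW query at the advertised cost.

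First I would apply the standard size-bucketing reduction, partitioning the $r$ \sd sets into $\cO(\log N)$ buckets by cardinality so that within a bucket every set has size $\Theta(b)$ for some $b$, and handle each pair of buckets independently. Within a fixed pair of buckets, I adopt the encoding of Lemma~\ref{lem:reduce_bmm_to_lce}: each ``left'' set $S_i$ becomes $\phi_A(S_i)$, a length-$u$ string that places wildcards at positions outside $S_i$ and $1$s inside, and each ``right'' set $S_j$ becomes the wildcard-free $\phi_B(S_j)$. The lemma guarantees that the \LCEW query between the starts of $\phi_A(S_i)$ and $\phi_B(S_j)$, inside a string that concatenates the two encodings, equals $u$ exactly when $S_i \cap S_j = \emptyset$.

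Second, to avoid a $\Theta(ru)$ blowup, I further partition each bucket into blocks of $\beta$ sets and build one \LCEW instance per pair of blocks, concatenating the $\phi_A$-encodings of one block with the $\phi_B$-encodings of the other, separated by non-matching guard characters. Each instance has length $\Theta(\beta u)$, carries $\Theta(\beta(u-b))$ wildcards, and resolves $\beta^2$ disjointness queries, one per \LCEW query at a known offset. I calibrate $\beta$ and the effective universe $u$ (restricted to elements actually appearing in the relevant blocks) so that each instance has length $\Theta(z)$ with at most $\cO(z^g)$ wildcards, and so that the total number $\Theta((r/\beta)^2)$ of instances, summed across bucket-size pairs up to $\polylog N$ factors, times the per-instance space bound $z^{s+g-1}$ equals $\cOtilde(z^s)$; the choice $z = \Theta(N^{3/(s+q+1)})$ is what makes this identity close. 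The preprocessing-time bound follows by the same calculation with ``construction time'' in place of ``space''.

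The main obstacle will be the algebraic verification that the stated inequalities $q \leq g \leq 1$, $(q+1)/2 \leq s$, $2+2q-3g \leq s$, and $s \leq 2-q$ are exactly the feasibility conditions for such a choice of $\beta$, $u$, and instance count: each inequality encodes a match between one of the three global budgets (per-instance length, wildcard count, and total number of instances) and the combined information-theoretic requirement $z^{s+q+1} = \Theta(N^3)$. Once feasibility is checked in the balanced case, the $\cO(\log N)$-factor bucketing yields the general case, absorbed by $\cOtilde$.
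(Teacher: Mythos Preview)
Your plan has a structural gap that prevents it from reaching the stated trade-off, especially for $g < 1$.

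In your block-pair construction, each instance concatenates $\beta$ left encodings and $\beta$ right encodings over a universe $u$ restricted to the elements appearing in those $2\beta$ sets. In the worst case $u = \Theta(\beta b)$, so the instance has length $\Theta(\beta u) = \Theta(\beta^2 b)$ while the left side alone contributes $\beta(u - b) = \Theta(\beta^2 b)$ wildcards. Hence the wildcard count is a constant fraction of the length, and the hypothesised data structure---which requires only $\cO(n^g)$ wildcards in a length-$n$ string---cannot be invoked unless $g = 1$. Even for $g = 1$ the arithmetic does not close: with $(r/\beta)^2$ instances and the per-instance cost $z^{s}$, you would need $(r/\beta)^2 = \cO(1)$, yet size-bucketing alone gives no non-trivial bound on $r$.

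The paper's proof supplies the two ingredients you are missing. First, it handles sets of size at most $z^q$ directly at query time, which forces $r \le N/z^q$. Second, it precomputes, for every \emph{infrequent} element (one lying in at most $y = \lceil z^s/N\rceil$ sets), all $\cO(y)$ pairs witnessing a non-empty intersection; this costs $\cO(Ny) = \cO(z^s)$ space and shrinks the universe to the $u = \Theta(N^2/z^s)$ frequent elements. With these two reductions, $ru = \Theta(N^3/z^{s+q}) = \Theta(z)$, so a \emph{single} string $T = L_1\cdots L_r R_1\cdots R_r$ already has length $\cO(z)$. To accommodate $g < 1$, the paper does not pair blocks symmetrically; instead it keeps \emph{all} wildcard-free strings $R_1,\dots,R_r$ in every instance and includes only $h = \Theta(z^g/u)$ of the $L_i$'s, yielding $k = \lceil r/h\rceil = \cO(z^{1-g})$ instances each of length $\Theta(z)$ with $\cO(hu) = \cO(z^g)$ wildcards. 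The constraints $(q+1)/2 \le s$, $2 + 2q - 3g \le s$, and $s \le 2 - q$ are exactly what make $N \le z^s$, $u \le z^g$, and $k \ge 1$ hold, respectively. Your symmetric block-pairing and per-pair universe restriction cannot replicate this asymmetry, which is the crux of the argument.
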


\begin{proof}
We consider an instance of \sd over a collection $\mathcal S$ of sets $S_1, S_2, \ldots , S_r$ of total size $N$.
For the defined value $z$, it holds $N = \cO(z^s)$ because $3s \geq s + q + 1$ (which follows from $\frac{q + 1}2 \leq s$). Since $s$ and $q$ are constant, it holds $\log z = \Theta(\log N)$.
If one of the sets in a
query is of size at most $z^q$, we can search for its elements in the other set in time $\cO(z^q \log N)$ at query time -- assuming that we store each set as a balanced binary search tree.
These binary trees can be constructed in $\cO(N \log N) \subseteq \cOtilde(z^s)$ time and space.
We thus henceforth assume that all sets in $\mathcal{S}$ are of size at least~$z^q$. In particular, this implies that $r \leq N/z^q$.

We now want to remove some elements from our sets in order to reduce the size of the universe $\bigcup_{S\in \mathcal{S}} S$.
Let $y = 
\ceil{z^s / N}$.
We call each integer $x$ that appears in at most $y$ sets from~$\mathcal{S}$ \emph{infrequent}; the remaining integers are called \emph{frequent}.
Let $\mathcal{T}$ be an initially empty balanced binary search tree that will store triplets in the lexicographic order.
For each infrequent integer~$x$,
for each pair $(i,j) \in [1\dd r]^2$ with $i \neq j$ and $x \in S_i \cap S_j$, we insert the triplet $(i,j,x)$ into~$\mathcal{T}$.
In total, we insert $\cO(Ny) = \cO(z^s)$ triplets into~$\mathcal{T}$ because each $x \in S_i$ can be contained in $\cO(y)$ triplets.
This procedure can be implemented in $\cO(Ny \log N) \subseteq \cOtilde(z^s)$ time and space straightforwardly if we first sort the elements of the set $\bigcup_{i \in [1 \dd r]} \{(x,i) : x \in S_i\}$.
Now, given distinct $i, j \in [1 \dd r]$, we can check whether sets $S_i$ and $S_j$ contain an infrequent integer in their intersection in $\cO(\log N) \subset \cOtilde(z^q)$ time.

Observe that the total number of frequent integers is at most $N/y$.
We replace each set~$S_i$ by its intersection with the set of frequent integers in $\cO(N \log N)$ time in total.
After renaming those integers,
we can assume that the elements of all sets are from universe $[1 \dd u]$ with $u = \lfloor N / y \rfloor = \Theta(N^2 / z^s)$.
We henceforth work with these modified sets.

Now we can assume that there are $r = \cO(N/z^q)$ sets, each of which is a subset of $[1\dd u]$, which is of size $\Theta(N^2 / z^s)$.
We represent each set $S_i$ using two length-$u$ strings $L_i$ and $R_i$ over alphabet $\{0, 1, \wild \}$.
For $x \in [1\dd u]$, if $x \in S_i$ then $L_i[x] = 1$ and $R_i[x] = 0$. Otherwise, $L_i[x] = \wild$ and $R_i[x] = 1$.
Observe that the total size of all constructed strings is $2ru$, which is
\[\cO(N/z^q \cdot N^2/z^s) = \cO(N^3/ z^{s+q}) = \cO(N^{3 - 3(s+q)/(s+q+1)}) = \cO(N^{3/(s+q+1)}) = \cO(z).\]
Also, these strings can be constructed in $\cO(z)$ time.
For now, assume $g = 1$, i.e., we can afford an arbitrary number of wildcards.
We build the \LCEW data structure from the statement of the \lcnamecref{lem:sdred:new} for $T := L_1 L_2 \cdots L_r R_1 R_2 \dots R_r$.
For any $x \in [1\dd u]$ and distinct $i,j \in [1 \dd r]$, we have
$L_i[x]  \mismatches R_j[x]$
if and only if
$x \in S_i \cap S_j$.
Thus, $S_i \cap S_j = \emptyset$ if and only if $L_i$ matches~$R_j$, which can be evaluated in $\cO(z^q)$ time using a single \LCEW query in $T$.
The space (resp.\ preprocessing time) of the \LCEW data structure is $\cOtilde(z^s)$, as required.

Now we generalize the result to arbitrary $g$. Instead of a single string $T$, we use multiple strings $T_1,\dots, T_k$ (we specify $k$ later) such that each is of length $\cO(z)$ and contains $\cO(z^g)$ wildcards. Each string is responsible for answering a subset of queries. Recall that $u = \Theta(N^2 / z^s)$ and $2 + 2q - 3g \leq s$. The latter implies that $N^2 / z^s = z^{(2q + 2 - s)/3} \leq z^g$, and thus ${u = \cO(z^g)}$. Consequently, we can define a positive integer $h = \Theta(z^g / u)$.
The number of strings is $k = \ceil{\frac{r}{h}} = \cO(ru / z^g) = \cO(N/z^q) = \cO(z^{(s+q+1)/3} /z^g) = \cO(z^{1 - g})$ since $s+q+1 \leq 3$.
For $x \in [1\dd k]$, we define $T_x = L_{x'+ 1}L_{x' + 2}\dots L_{x' + h} \cdot R_1R_2\dots R_r$ with $x' = h \cdot (x - 1)$ (where handling the border case for $T_k$ is trivial).
Hence, each $L_i$ is contained in exactly one of the strings. 
A query $S_i, S_j$ can be answered using one \LCEW query in the string $T_x$ that contains~$L_i$.

It remains to be shown that we can construct \LCEW data structures with query time $\cO(z^q)$ for all the strings in overall  $\cO(z^s)$ space (resp.\ preprocessing time).
Every string is of length $\cO(z)$ and contains at most $u\cdot h = \cO(z^g)$ wildcards.
We construct $k = \cO(z^{1 - g})$ \LCEW data structures; each can be built in $\cOtilde(z^{s + g - 1})$ space (resp.\ preprocessing time). Hence the total space (resp.\ preprocessing time) for these data structures is $\cOtilde(z^s)$, as required. However, $T_1, \dots, T_k$ are of overall length $\cO(kz) = \cO(z^{2 - g})$, and we cannot afford to explicitly construct them if $s < 2 - g$. Instead, we only construct the single string $T$ of length $\cO(z)$. Given $T$, we can easily simulate constant time random access to any $T_x$.
\end{proof}

\subsection{Obtaining the Lower Bounds}

We next use the reduction from \cref{lem:sdred:new} to show conditional lower bounds on the trade-off of the query time against both the space and the preprocessing time. Using the notation of the upper bound, the result below states that significantly less than $\cO(nG/t^4)$ space and preprocessing time cannot be achieved for a wide range of $t$ and~$G$.

\begin{theorem}\label{thm:3sum_and_sd_conditional_lower}
    Consider constants $\eps, q, g$ such that $0 \leq q < \frac{1 - \eps}3$ and $2q + \frac{\eps}3 \leq g \leq 1$.
    For a string of length $\Theta(n)$ that contains $\cO(n^g)$ wildcards, consider an \LCEW data structure with $\cO(n^q)$ query time. Then both of the following statements hold.
    \begin{enumerate}[(i)]
        \vspace{.25\baselineskip}
        \item $\cO(n^{1 + g - 4q - \eps})$ preprocessing time cannot be achieved under the \tsumconj.
        \item $\cO(n^{1 + g - 4q - \eps})$ space cannot be achieved under
        the \sdconjstrong.
    \end{enumerate}
\end{theorem}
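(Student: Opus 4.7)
The plan is to apply \cref{lem:sdred:new} with a carefully chosen $s$ and derive a contradiction with either \cref{thm:3sum-sd} (under the \tsumconj) for part~(i), or \cref{conj:sd_strong} for part~(ii). I will proceed by contradiction: assume that an \LCEW data structure with query time $\cO(n^q)$ and preprocessing time (resp.\ space) $\cO(n^{1 + g - 4q - \eps})$ exists on strings of length $\Theta(n)$ with $\cO(n^g)$ wildcards. Setting $s := 2 - 4q - \eps$, the preprocessing/space bound matches the form $\cOtilde(n^{s + g - 1})$ required by \cref{lem:sdred:new}, so it suffices to verify the hypotheses of that lemma and then chase the parameters through the reduction.

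The first task is checking that $(q, s, g)$ satisfies the hypotheses $0 \le q \le g \le 1$ and $\max(\tfrac{q+1}{2}, 2 + 2q - 3g) \le s \le 2 - q$. The bound $s \le 2 - q$ reduces to $-3q - \eps \le 0$, which is immediate. The bound $s \ge \tfrac{q+1}{2}$ simplifies to $9q + 2\eps \le 3$, which follows from $q < \tfrac{1 - \eps}{3}$. The bound $s \ge 2 + 2q - 3g$ is equivalent to $g \ge 2q + \eps/3$, which is exactly the hypothesis of the theorem. Finally, $g \ge 2q + \eps/3 > q$, so all conditions hold.

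Applying \cref{lem:sdred:new} then produces an \sd data structure on instances of size $N$ with query time $\cOtilde(z^q)$ and space/preprocessing $\cOtilde(z^s)$, where $z = \Theta(N^{3/(s + q + 1)}) = \Theta(N^{3/(3 - 3q - \eps)})$. I will rewrite these complexities directly in terms of $N$, obtaining query time $\cOtilde(N^{q'})$ with $q' := 3q/(3 - 3q - \eps)$ and space/preprocessing $\cOtilde(N^{3s/(3 - 3q - \eps)})$. A short calculation shows that the latter exponent equals $2 - 2q' - \eps'$ for $\eps' := \eps/(3 - 3q - \eps) > 0$, since $2 - 2q' - 3s/(3-3q-\eps) = (2(3-3q-\eps) - 6q - 3(2-4q-\eps))/(3-3q-\eps) = \eps/(3-3q-\eps)$.

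The last step is to verify $q' \le \tfrac{1}{2}$, which is needed for the \sd lower bounds to apply. Since $q < \tfrac{1-\eps}{3}$, we have $9q + \eps < 3 - 2\eps < 3$, so $6q < 3 - 3q - \eps$ and thus $q' < \tfrac{1}{2}$. This closes the argument: for part~(i), the resulting \sd data structure with preprocessing $\cO(N^{2 - 2q' - \eps'})$ and query time $\cO(N^{q'})$ directly refutes \cref{thm:3sum-sd} under the \tsumconj; for part~(ii), the analogous space trade-off refutes \cref{conj:sd_strong}. The main obstacle is purely bookkeeping: ensuring that with the single choice $s = 2 - 4q - \eps$ every required inequality (the three conditions of \cref{lem:sdred:new}, positivity of $\eps'$, and $q' \le \tfrac{1}{2}$) is simultaneously satisfied by the given constraints $q < (1-\eps)/3$ and $g \ge 2q + \eps/3$. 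There is no additional conceptual content beyond this parameter-matching exercise.
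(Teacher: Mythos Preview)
Your proposal is correct and follows essentially the same approach as the paper: both set $s = 2 - 4q - \eps$, verify the hypotheses of \cref{lem:sdred:new}, compute that the resulting \sd data structure has query exponent $q' = 3q/(3-3q-\eps) < \tfrac12$ and space/preprocessing exponent $2 - 2q' - \eps'$ for a positive constant $\eps'$, and then invoke \cref{thm:3sum-sd} or \cref{conj:sd_strong}. Your parameter checks and algebraic manipulations are equivalent to the paper's, with only cosmetic differences in presentation.
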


\begin{proof}
    For $(i)$, let $s = 2 - 4q - \eps < 2 - q$, and assume that the data structure can be constructed in $\cO(n^{1 + g - 4q - \eps}) = \cO(n^{s+g-1})$ time. 
    We have $6q + \eps \leq 3g$ and thus $2+2q-3g \leq 2-4q-\eps = s$. 
    Also, due to $q \leq \frac{1 - \eps}3$, we have $\frac{q + 1}2 \leq 2 - 4q - 1.5\eps < s$.
    Hence $q$, $g$, and $s$ satisfy the conditions required by \cref{lem:sdred:new}, and we obtain a data structure that solves a \sd instance of size $N$ in $\cO(z^s)$ preprocessing time and $\cO(z^q)$ query time, where $z = \Theta(N^{3/(s + q + 1)})$. Let $q' = 3q/(3 - 3q - \eps)$. For $0 \leq q < \frac{1-\eps}3$, and up to constant factors, we have
    \[z^q = N^{\frac{3q}{s + q + 1}} = N^{\frac{3q}{3-3q-\eps}} = N^{q'}\textnormal{, and}  \]
    \[z^s = N^{\frac{3s}{s + q + 1}} = N^{\frac{6-12q-3\eps}{3-3q-\eps}} = N^{\frac{6-6q-3\eps}{3-3q-\eps} - 2q'} =  N^{2 - 2q' - \frac{\eps}{3-3q-\eps}} \leq N^{2-2q'-{\eps}/3}.  \]
    It can be readily verified that $q' < \frac12$ (as $q'$ grows with $q$ and reaches its maximum when $q$ goes to $\frac{1 - \eps}3$). Hence, by \cref{thm:3sum-sd}, solving \sd with $\cO(z^s)$ preprocessing time and $\cO(z^q)$ query time contradicts the \tsumconj.
    
    The proof of $(ii)$ is almost identical. We initially assume that the \LCEW data structure uses $\cO(n^{s+g-1})$ space. Using \cref{lem:sdred:new}, we can solve a \sd instance of size $N$ in $\cO(z^s)$ space and $\cO(z^q)$ query time, contradicting the \sdconjstrong.
\end{proof}


\bibliographystyle{alphaurl}
\bibliography{references}
\end{document}